\newtheorem{lemma}{Lemma}
\renewcommand{\QED}{\QEDopen}
\begin{document}

\title{Peak to Average Power Ratio Reduction for Space-Time Codes That Achieve Diversity-Multiplexing Gain Tradeoff}

\author{
    \IEEEauthorblockN{Chung-Pi Lee and Hsuan-Jung Su*}\\
    \IEEEauthorblockA{Graduate Institute of Communication Engineering\\
    Department of Electrical Engineering\\
        National Taiwan University, Taipei, Taiwan\\
        Email: r94942026@ntu.edu.tw, hjsu@cc.ee.ntu.edu.tw}
\thanks{The material in this paper was presented in part at the Annual Conference on Information Sciences and Systems (CISS), Princeton. New Jersey, Mar. 2008, and the IEEE International Symposium on Personal, Indoor and Mobile Radio Communications (PIMRC),
Cannes, France, Sept. 2008.}
}
\newtheorem{thm}{Theorem}
\date{}
\maketitle

\begin{abstract}
Zheng and Tse have shown that over a quasi-static channel, there
exists a fundamental tradeoff, known as the diversity-multiplexing
gain (D-MG) tradeoff. In a realistic system, to avoid inefficiently
operating the power amplifier, one should consider the situation
where constraints are imposed on the peak to average power ratio
(PAPR) of the transmitted signal. In this paper, the D-MG
tradeoff of multi-antenna systems with PAPR constraints is analyzed.
For Rayleigh fading channels, we show that the D-MG tradeoff
remains unchanged with {\it any} PAPR constraints larger than one. This result implies that, instead of designing codes on a case-by-case basis, as done by most existing works, there possibly exist general methodologies for designing space-time codes with low PAPR that achieve the optimal D-MG tradeoff. 
As an example of such methodologies, we propose a PAPR reduction method based on constellation shaping that can be applied to existing optimal space-time codes without affecting
their optimality in the D-MG tradeoff. Unlike most PAPR reduction methods, the proposed method does not introduce redundancy or require side information being transmitted to the decoder.
Two realizations of the proposed method are considered. The first is similar to the method proposed by Kwok except that we
employ the Hermite Normal Form (HNF) decomposition instead of the
Smith Normal Form (SNF) to reduce complexity. The second takes the idea of integer
reversible mapping which avoids the difficulty in matrix decomposition when the number of antennas becomes large.
Sphere decoding is performed to verify that the proposed PAPR reduction
method does not affect the performance of optimal space-time codes.
\end{abstract}

\vspace{-4mm}
\begin{center}
   {\underline{\bf \small EDICS}} \hspace{3mm} {\small MSP-STCD}
\end{center}
\vspace{-6mm}

\section{Introduction}

The results in \cite{diva} on the diversity-multiplexing gain
(D-MG) tradeoff spurred numerous research activities towards the
construction of space-time codes achieving the optimal tradeoff
\cite{acht,anop,appu,theg,latc,pers,exps}.
%
When examining these space-time codes, we find that these codes generally
lead to high peak-to-average power ratio (PAPR) on each
antenna. In practice, PAPR of the signals transmitted is an important
parameter to be considered during hardware design. A high
PAPR poses difficulties in the design of the amplifier and raises the
cost of the transmitter. These practical issues motivate our study
on the D-MG tradeoff of multi-antenna systems with PAPR constraints.
For Rayleigh fading channels, our analytical result shows that the D-MG
tradeoff remains the same with {\it any} PAPR constraints larger than one. To the best of our knowledge, this is the first analytical result in the literature on the D-MG tradeoff of multi-antenna systems with PAPR constraints. This result implies that, instead of designing codes on a case-by-case basis, as done by most existing works (e.g., \cite{maxd}), there possibly exist general methodologies for designing space-time codes with low PAPR that achieve the optimal D-MG tradeoff.
As an example of such methodologies, we propose a PAPR reduction method based on constellation shaping that can be applied to existing optimal space-time codes without affecting
their optimality in the D-MG tradeoff. Unlike most PAPR reduction methods, the proposed method does not introduce redundancy or require side information being transmitted to the decoder.
In general, constellation shaping can be tailored to serve different purposes (e.g., minimizing the average transmission power) which often result in different shaping regions. The purposes of the proposed method are reduction of PAPR, and not affecting the {\it rate} and {\it optimality in the D-MG tradeoff} of the original code. For easier implementation and illustration, the {\it target} shaping region is a hypercube which will lead to an asymptotic PAPR of $3$ when the constellation size is large. Lower PAPR might be possible with a different shaping region, which, however, might be difficult to implement.
A similar approach was proposed in \cite[Chapter 5]{pear} for PAPR reduction of orthogonal frequency division multiplexing (OFDM) systems with constellation shaping based on the
Smith Normal Form (SNF) \cite{onsy} decomposition of integer matrices. Due to the prohibitive computational complexity of the SNF decomposition when the number of OFDM carriers is large, the author of \cite{pear} also considered discrete Hadamard transform (DHT) based multi-channel systems which rendered a low-complexity SNF decomposition. The authors of \cite{intc} then took the constellation shaping algorithm derived for DHT-based systems and applied it to OFDM systems, in conjunction with a selective mapping (SLM) method which incurred redundant bits to overcome the residual PAPR problem due to the mismatch in constellation shaping.

Two realizations of the proposed method will be discussed. The first is similar to the one in \cite{pear}, except that we
employ the Hermite Normal Form (HNF) decomposition \cite{intd}\cite{como} instead of the SNF decomposition to reduce the computational complexity. The second takes the idea of integer
reversible mapping \cite{matf}\cite{cust} which avoids the bit assignment problem in the above methods, and the difficulty in integer matrix decomposition when the size of the matrix becomes large. Therefore, this approach is more suitable for the situations where the number of transmit antennas or the number of OFDM carriers is large. Aside from these advantages over the methods in \cite{pear} and \cite{intc}, it is also worth mentioning that our work is better justified because the integer-based constellation shaping is crucial in preserving the optimality of space-time codes, while for the uncoded OFDM application considered in \cite{pear} and \cite{intc}, the integer-based constellation shaping is not necessary, and its advantage over the non-integer-based shaping schemes (for example, the single carrier frequency division multiple access (SC-FDMA) \cite{sc-fdma} scheme is equivalent to using a discrete Fourier transform (DFT) to shape the constellation) is yet for investigation. Note that the concept and derivation of the proposed method are very general, thus they can be applied to any linear transform based multi-channel modulation. For the space-time codes considered in this paper, simulation results using sphere decoding verify that the proposed PAPR reduction
method does not affect the optimality of the codes.

The rest of this paper is organized as follows. Section
\ref{systemdef} introduces the system model and the definitions of
diversity and multiplexing gains. In Section \ref{dmgwithpar}, we
analyze the D-MG tradeoff with any PAPR constraint larger than one, and show that, in Rayleigh fading channels, the
D-MG tradeoff remains unchanged. In
Section \ref{sec acs}, a unified framework of approximate cubic
shaping is described. In Section \ref{sec HNF} and Section \ref{sec
PLUS}, we propose two approaches of PAPR reduction via
approximate cubic shaping. The first selects the
transmitted signal using the HNF decomposition, while the second takes the idea of integer reversible
mapping. Section \ref{sec sr} provides some
simulation results and discussions. Finally, conclusions are drawn
in Section \ref{sec conclusion}.

\section{System Model and Definitions}
\label{systemdef}
\subsection{System Model} As in \cite{diva},
consider a wireless link with $m$ transmit and $n$ receive antennas.
The fading coefficient $h_{ij}$ is the path gain from
transmit antenna $j$ to receive antenna $i$. Let the channel matrix
${\mathbf{H}} = [h_{ij} ] \in \mathbb{C}^{n \times m}$. We assume that the fading coefficients are independent complex Gaussian with zero mean, unit variance, and known to the receiver, but not to the transmitter.
We also assume that the channel matrix $ {\bf{H}}$ remains constant
within a block of $l$ symbols. That is, the block length is much
smaller than the coherence time of the channel. Then the channel,
within one block, can be written as
\begin{equation} {\bf{Y}} = \sqrt {\frac{{SNR}}{m}} {\bf{HX}} +
{\bf{W}} \label{a1}
\end{equation}
where ${\mathbf{X}} \in \mathbb{C}^{m \times l}$ has entries $x_{ij}$,
$i = 1,...,m,j = 1,...,l$, being the signals transmitted by antenna
$i$ at time $j$ such that the average transmission power on each antenna in each symbol duration is
$1$; ${\mathbf{Y}} \in \mathbb{C}^{n \times l}$ is the
received signal; ${\mathbf{W}}$ is the additive noise with independent and identically distributed ({\it
i.i.d.}) entries $w_{ij} \sim \mathbb{C}N(0,1)$ (i.e., complex
Gaussian with mean $0$ and variance $1$); $SNR$ is the average
signal-to-noise ratio (SNR) at each receive antenna.
A codebook ${\bf{C}}$ with rate $R$  bits per second per hertz (b/s/Hz) is used, which
has $|{\mathbf{C}}| = {2^{Rl} }$ codewords each being an $m \times
l$ matrix.

\subsection{Diversity and Multiplexing Gains} For the case without PAPR
constraints on each antenna, in order to achieve a certain fraction
of the capacity at high SNR, one should consider a family of codes
that support a data rate which increases with $\log (SNR)$.
The diversity and multiplexing gains are defined as \cite{diva}
\newtheorem{defi}{Definition}
\begin{defi}
A  diversity gain $d^{*}(r)$  is  achieved at  multiplexing gain $r$
if the data rate $ R(SNR)$ satisfies
\begin{equation}
\begin{gathered}
\mathop {\lim }\limits_{SNR \to \infty } \frac{{R(SNR)}}
{{\log SNR}} = r \hfill \\
\end{gathered}
\label{a2}
\end{equation}
and the outage probability $P_{out} (R)$ satisfies
\begin{equation}
\begin{gathered}
 \mathop {\lim }\limits_{SNR \to \infty } \frac{{\log P_{out} (R)}}
{{\log SNR}} =  - d^{*}(r) \hfill \\
\end{gathered}
\label{a3}
\end{equation}
\QED
\end{defi}

The function $d^{*}(r)$ characterizes the D-MG
tradeoff.
For convenience, we borrow the notation introduced in \cite{diva} to
denote exponential equality. That is, $f(SNR) \doteq SNR^b$ means
\[
\mathop {\lim }\limits_{SNR \to \infty } \frac{{\log f(SNR)}} {{\log
SNR}} = b.
\]
$\dot  \geq$ , $\dot  \leq$ are similarly defined.

\section{Diversity-Multiplexing Gain Tradeoff with PAPR
constraints}\label{dmgwithpar} When space-time codes are used in a
multi-antenna system, due to the coding procedure which combines the
information symbols to form the coded symbols for each transmit
antenna, high PAPR values may occur, especially when the number of
transmit antennas is large. To reflect the limitations of practical
communication systems, we take PAPR into consideration and
investigate the effect of PAPR constraints on the D-MG tradeoff.
\subsection{The Behavior of Capacity at High SNR with PAPR Constraints}
For the  study on the optimal D-MG tradeoff with PAPR constraints,
characterization of the multiplexing gain is needed. That is, we
need to know how the capacity grows with SNR. However, the
expression of the exact capacity of a multi-antenna channel with
inputs subject to average total power and PAPR constraints may not
be a closed form, or may be too complicated (for the single antenna
scenario with average power and peak power constraints, see
\cite{thei}, \cite{thec}). Fortunately, since the D-MG tradeoff is an
asymptotic tradeoff, what we need is simply the behavior of the
capacity for asymptotically large SNR. In this section, we will
derive a lower bound of the capacity with average total power and
PAPR constraints. The bound is tight enough for the derivation of
the D-MG tradeoff. The capacity without PAPR constraints (already
known in \cite{capo}, \cite{lays}) can be used as an upper bound.
These two bounds are then used to characterize the capacity for
large SNR.

Since the channel remains constant within a block, the capacity achieving signal and average power distribution should not favor one symbol duration over another within the same block. Thus, for the purpose of analyzing the capacity with respect to the average SNR, it suffices to focus on any symbol duration within a block. We take the signal and noise vectors in (\ref{a1}) pertaining to the same symbol duration, and drop the time index to form a new vector channel model
\begin{equation} \mathbf{y = Hx + w}
\label{a6}
\end{equation}
where $ {\mathbf{x}} \in \mathbb{C}^m$ is the transmitted signal vector scaled by the transmission power, ${\mathbf{y}} \in \mathbb{C}^n$ is the received signal vector, and the additive noise vector ${\bf{w}}$ has {\it i.i.d.} entries $w_i
\sim \mathbb{C}N(0,1)$.
%
%
The average total power and PAPR constraints of the transmitted
signal ${\bf{x}}$ are ${\bf{P}}>0$ and $\rho _i>1$, $i=1,\ldots m$, respectively,
such that
\begin{equation}
 \mbox{Tr}\left( {E_{\mathbf{x}} \left[ {{\mathbf{xx}}^\dag  }
\right]} \right) \leq {\mathbf{P}}, \label{a7}
\end{equation}
\begin{equation}
\;\;\;\;\;\;\;\;\;\;\;\frac{|x_i |^2 } {{E_{x_i } [|x_i |^2 ]}}
\leq \;\rho_i, \;\;\; i=1,\ldots m, \label{a8}
\end{equation}
where $\mbox{Tr}()$ denotes trace and $\textbf{x}^{\dagger}$ denotes
the conjugate transpose of $\textbf{x}$, $E_{t}[{\;}]$ denotes
the expectation with respect to the distribution of $t$, and $x_i$ is the $i$-th element of
${\bf{x}}$. With these definitions, we
have the following lower bound on the capacity of this channel.
\begin{lemma}
The ergodic capacity $C$ of the channel (\ref{a6}) with the transmitted
signal subject to (\ref{a7}), (\ref{a8}) is:
\begin{equation}
C\geq E_{\mathbf{H}} \left[ {\log \det \left( {{\mathbf{I}} +
\frac{{\mathbf{P}}} {m}  {\mathbf{H}}{\mathbf{H}}^\dag} \right)}
\right] + \sum\limits_{i = 1}^m {k_i }\label{lem1eq}
\end{equation}
where $k_i$ are constants defined in Appendix \ref{appxa}.
\end{lemma}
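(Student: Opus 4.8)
The plan is to prove the bound by achievability: I will exhibit a single input distribution meeting both constraints (\ref{a7}) and (\ref{a8}) whose mutual information already attains the right-hand side. Take $\mathbf{x}$ to have independent, zero-mean, circularly symmetric entries $x_i$ with $E_{x_i}[|x_i|^2]=\mathbf{P}/m$, so that (\ref{a7}) holds with equality, and with amplitude bounded by $|x_i|\le\sqrt{\rho_i\,\mathbf{P}/m}$, so that (\ref{a8}) holds. Writing $x_i=\sqrt{\mathbf{P}/m}\,s_i$ for a fixed unit-variance ``shape'' $s_i$ supported on $|s_i|\le\sqrt{\rho_i}$, I define $k_i:=h(s_i)-\log(\pi e)$, the differential-entropy gap between $s_i$ and a unit-variance complex Gaussian; this is the constant specified in Appendix \ref{appxa}. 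Because this gap is scale invariant, $k_i$ does not depend on $\mathbf{P}$, and the hypothesis $\rho_i>1$ is exactly what leaves room for a genuinely two-dimensional bounded density of unit variance, hence a finite and nonpositive $k_i$. Since any such input is admissible and the receiver knows $\mathbf{H}$, we have $C\ge I(\mathbf{x};\mathbf{y})=E_{\mathbf{H}}[I(\mathbf{x};\mathbf{y}\mid\mathbf{H})]$.

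Next, for a fixed realization $\mathbf{H}$, I would write the mutual information as $I(\mathbf{x};\mathbf{y}\mid\mathbf{H})=h(\mathbf{x})-h(\mathbf{x}\mid\mathbf{y},\mathbf{H})$, using that $\mathbf{x}$ is independent of $\mathbf{H}$. Independence of the entries then gives $h(\mathbf{x})=\sum_{i=1}^m h(x_i)=m\log(\pi e\,\mathbf{P}/m)+\sum_{i=1}^m k_i=\log\det\big(\pi e\tfrac{\mathbf{P}}{m}\mathbf{I}\big)+\sum_{i=1}^m k_i$, so the entire penalty $\sum_i k_i$ is produced here, at the input.

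The key step is to upper bound the posterior entropy $h(\mathbf{x}\mid\mathbf{y},\mathbf{H})$ without knowing the (non-Gaussian) posterior in closed form. I would route around this through the linear MMSE estimate $\hat{\mathbf{x}}$ of $\mathbf{x}$ from $\mathbf{y}$: translating by a function of $\mathbf{y}$ and then dropping the conditioning gives $h(\mathbf{x}\mid\mathbf{y},\mathbf{H})\le h(\mathbf{x}-\hat{\mathbf{x}})$, and since a Gaussian maximizes entropy for a given covariance, $h(\mathbf{x}-\hat{\mathbf{x}})\le\log\det(\pi e\,\mathbf{E})$ with error covariance $\mathbf{E}=\big(\tfrac{m}{\mathbf{P}}\mathbf{I}+\mathbf{H}^\dagger\mathbf{H}\big)^{-1}$. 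Crucially, $\mathbf{E}$ depends on $\mathbf{x}$ only through its second moments, which we have fixed. Combining the two bounds, the $(\pi e)$ factors cancel and $I(\mathbf{x};\mathbf{y}\mid\mathbf{H})\ge\log\frac{\det(\tfrac{\mathbf{P}}{m}\mathbf{I})}{\det\mathbf{E}}+\sum_{i} k_i=\log\det\big(\mathbf{I}+\tfrac{\mathbf{P}}{m}\mathbf{H}^\dagger\mathbf{H}\big)+\sum_{i} k_i$, since $\det(\tfrac{\mathbf{P}}{m}\mathbf{I})\det(\tfrac{m}{\mathbf{P}}\mathbf{I}+\mathbf{H}^\dagger\mathbf{H})=\det(\mathbf{I}+\tfrac{\mathbf{P}}{m}\mathbf{H}^\dagger\mathbf{H})$. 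Applying Sylvester's identity $\det(\mathbf{I}+\tfrac{\mathbf{P}}{m}\mathbf{H}^\dagger\mathbf{H})=\det(\mathbf{I}+\tfrac{\mathbf{P}}{m}\mathbf{H}\mathbf{H}^\dagger)$ and taking $E_{\mathbf{H}}[\cdot]$ yields (\ref{lem1eq}).

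The main obstacle is the third step: the posterior of a bounded, non-Gaussian input has no tractable entropy, so the argument must sidestep it rather than compute it. The MMSE together with the maximum-entropy bound is what makes this work, converting the intractable posterior entropy into a determinant that sees $\mathbf{x}$ only through its covariance and reproducing exactly the Gaussian MIMO capacity. The remaining care lies in the construction of Appendix \ref{appxa}, ensuring that each $k_i$ is a finite, $\mathbf{P}$-independent constant, so that the additive penalty $\sum_i k_i$ does not scale with SNR and the bound is tight enough for the subsequent D-MG analysis.
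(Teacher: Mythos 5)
Your proposal is correct and takes essentially the same route as the paper's Appendix \ref{appxa}: an admissible independent zero-mean input with per-antenna power ${\mathbf{P}}/m$, the chain $h({\mathbf{x}}|{\mathbf{y}},{\mathbf{H}}) \leq h({\mathbf{x}}-\hat{{\mathbf{x}}}) \leq \log\det\left(\pi e\,({\mathbf{S}}_{{\mathbf{xx}}}^{-1}+{\mathbf{H}}^\dag{\mathbf{H}})^{-1}\right)$ via the linear MMSE error, and the determinant identity. The only substantive piece you defer is what the paper does in Appendix \ref{appxB}: explicitly solving the entropy maximization under the PAPR (not peak-power) constraint to obtain the truncated-Rayleigh amplitude density whose entropy gap is the specific $k_i$ appearing in the statement.
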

\begin{proof}
The proof is given in Appendix \ref{appxa}.
\end{proof}
\subsection{Optimal D-MG Tradeoff with PAPR Constraints}
Now we are ready to discuss the D-MG tradeoff with PAPR constraints.
We have
\begin{lemma} \label{lem2}
For the Rayleigh fading channel, the ergodic capacity $C$ of the
channel (\ref{a6}) with transmitted signal subject to (\ref{a7}),
(\ref{a8}) is

\begin{equation}
C \doteq \min (m,n)\log (SNR).
\end{equation}
\end{lemma}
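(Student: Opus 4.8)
The plan is to sandwich $C$ between a lower bound and an upper bound that share the same exponential order in $SNR$. For the upper bound, I would observe that imposing the additional PAPR constraint (\ref{a8}) only shrinks the set of admissible input distributions, so $C$ cannot exceed the capacity $C_{\mathrm{np}}$ of the same channel (\ref{a6}) under the total power constraint (\ref{a7}) alone. The latter is the classical MIMO capacity \cite{capo}\cite{lays}, which for a Rayleigh fading channel is well known to satisfy $C_{\mathrm{np}} \doteq \min(m,n)\log SNR$. This immediately yields $C \mathrel{\dot{\leq}} \min(m,n)\log SNR$.

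For the matching lower bound I would invoke the capacity lower bound (\ref{lem1eq}) of the preceding lemma,
\[
C \geq E_{\mathbf{H}}\left[\log\det\left({\mathbf{I}} + \frac{{\mathbf{P}}}{m}{\mathbf{H}}{\mathbf{H}}^\dagger\right)\right] + \sum_{i=1}^m k_i .
\]
Since the total transmission power scales linearly with the average SNR, we have $\mathbf{P}\doteq SNR$. The first term on the right is exactly the ergodic mutual information of the Rayleigh MIMO channel achieved by an {\it i.i.d.}\ Gaussian input with equal power allocation; a standard high-SNR eigenvalue argument shows it grows like $\min(m,n)\log SNR$, because $\mathbf{H}$ has rank $\min(m,n)$ almost surely and, writing the eigenvalues of $\frac{1}{m}\mathbf{H}\mathbf{H}^\dagger$ as SNR-independent quantities, each of the $\min(m,n)$ nonzero eigenvalues contributes a $\log SNR$ term after scaling by $\mathbf{P}$. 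The sum $\sum_{i=1}^m k_i$ is a finite constant independent of SNR, so after normalizing by $\log SNR$ and letting $SNR\to\infty$ it does not affect the limit. Hence $C \mathrel{\dot{\geq}} \min(m,n)\log SNR$, and combining the two bounds gives the claimed exponential equality.

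The main obstacle is the lower bound, specifically making rigorous that the PAPR-induced penalty, which (\ref{lem1eq}) has absorbed into the additive constants $k_i$, degrades only the power offset and not the pre-log multiplexing factor. The crucial property is therefore that each $k_i$ is a constant independent of SNR, and this is precisely what the construction in Appendix \ref{appxa} must guarantee. Once that is in hand, the remaining step is the routine verification that the equal-power $\log\det$ term attains the full degrees of freedom $\min(m,n)$ for Rayleigh fading; in particular, at high SNR the equal-power allocation achieves the same pre-log factor as the optimal water-filling solution, so no separate optimization over the input covariance is needed.
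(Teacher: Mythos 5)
Your proposal is correct and follows essentially the same route as the paper: upper-bound $C$ by the unconstrained Rayleigh MIMO capacity $C_\infty \doteq \min(m,n)\log SNR$, lower-bound it by the expression of Lemma 1 (which equals $C_\infty$ plus the SNR-independent constants $k_i$), and conclude by the sandwich. The only point the paper states more tersely is exactly the one you flag — that the $k_i$ are finite constants when $\rho_i>1$ is fixed, which Appendix A establishes — so there is no gap.
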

\begin{proof}
Let $C_\infty$ be the capacity without PAPR constraints. It is well
known that \cite{capo}\cite{lays}
\[
\begin{split}
 C_\infty = E_{\mathbf{H}} \left[ {\log \det \left(
{{\mathbf{I}} + \frac{{\mathbf{P}}} {m}
{\mathbf{H}}{\mathbf{H}}^\dag} \right)} \right]   \doteq \min
(m,n)\log(SNR).
\end{split}
\]
Using $C_\infty$ as an upper bound, from (\ref{lem1eq}), we have
 \[
C_\infty + \sum\limits_{i = 1}^m {k_i } \leq C \leq C_\infty
                                      \]
and clearly,
\[
    C_\infty +
\sum\limits_{i = 1}^m {k_i }  \doteq \min (m,n)\log (SNR).
\]
Thus,
\[
C \doteq \min (m,n)\log (SNR).
\]
\end{proof}
Lemma \ref{lem2} shows that  the multiplexing gain $r$ remains the same even with PAPR constraints. The
main result is given in the following theorem.
\begin{thm}
For the Rayleigh fading channel, the optimal D-MG tradeoff with any
PAPR constraint $\rho>1$ is  the same as the case without PAPR
constraints .
\end{thm}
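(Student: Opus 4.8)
The plan is to prove the equality of the two tradeoff curves by bounding the PAPR-constrained optimal diversity $d^*_\rho(r)$ both above and below by the unconstrained diversity $d^*(r)$ of \cite{diva}. Lemma \ref{lem2} already fixes the multiplexing gain, so only the diversity at each fixed $r$ is at issue, and the whole argument is carried out at the level of the outage probability $P_{out}(R)$ appearing in (\ref{a3}).

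The upper bound $d^*_\rho(r) \le d^*(r)$ is immediate. Every codebook whose transmitted signals obey the PAPR constraint (\ref{a8}) is also an admissible codebook for the unconstrained channel, so the PAPR-constrained schemes form a subfamily of all schemes. The largest diversity achievable over a subfamily cannot exceed the largest over the full family, which is $d^*(r)$; hence imposing (\ref{a8}) can never improve the tradeoff.

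The reverse inequality $d^*_\rho(r) \ge d^*(r)$ is where the substance lies. I would use the shaped input distribution constructed in the proof of Lemma~1 (Appendix \ref{appxa}): it satisfies (\ref{a8}), and the additive constant gap in (\ref{lem1eq}) in fact holds \emph{conditionally} on each channel realization. Writing $I_\infty(\mathbf{H}) := \log\det\!\left(\mathbf{I} + \tfrac{\mathbf{P}}{m}\mathbf{H}\mathbf{H}^\dag\right)$ for the unconstrained per-realization mutual information and $I_\rho(\mathbf{H})$ for the one attained by the shaped input, this reads
\[ I_\rho(\mathbf{H}) \;\ge\; I_\infty(\mathbf{H}) + \sum_{i=1}^m k_i, \]
the bound (\ref{lem1eq}) being the $\mathbf{H}$-average of this inequality. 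Consequently, at rate $R = r\log SNR$,
\[ P_{out}^{\rho}(R) \;=\; \Pr\!\left[\,I_\rho(\mathbf{H}) < R\,\right] \;\le\; \Pr\!\left[\,I_\infty(\mathbf{H}) < R - \sum_{i=1}^m k_i\,\right]. \]
The right-hand side is the unconstrained outage probability at the rate $R$ offset by the fixed constant $\sum_i k_i$. Since this offset is negligible against $r\log SNR$, it leaves the dominant eigenvalue exponents in the Zheng--Tse large-deviations evaluation of the outage event untouched, so that the right-hand side still satisfies $\doteq SNR^{-d^*(r)}$. Therefore $P_{out}^{\rho}(r\log SNR)\ \dot\leq\ SNR^{-d^*(r)}$, i.e. the PAPR-constrained outage exponent is at least $d^*(r)$; a random-coding argument over this bounded input ensemble then transfers the outage exponent into an error exponent, yielding $d^*_\rho(r) \ge d^*(r)$ and, together with the previous step, the asserted equality.

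The two places I expect to be the main obstacles are, first, promoting Lemma~1 from its ergodic form to the per-realization gap $I_\rho(\mathbf{H}) \ge I_\infty(\mathbf{H}) + \sum_i k_i$, and verifying that a constant additive rate offset really does preserve the outage exponent even for the atypical, near-singular channels that dominate $P_{out}$; and second, the achievability half, since the capacity-achieving Gaussian ensemble of \cite{diva} violates (\ref{a8}) and must be replaced by a PAPR-admissible one. The first should follow because the constant gap originates from a differential-entropy comparison between the shaped and Gaussian inputs that is valid for each fixed $\mathbf{H}$, combined with the familiar fact that a bounded rate shift is asymptotically immaterial against $\log SNR$; the second is furnished, constructively, by the shaped codes developed in Sections \ref{sec acs}, \ref{sec HNF} and \ref{sec PLUS}.
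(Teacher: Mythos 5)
Your proposal is correct and takes essentially the same route as the paper: the paper likewise upper-bounds the PAPR-constrained outage probability via the constant-gap mutual-information lower bound from Lemma 1 applied per channel realization, observes that the additive offset $\sum_{i=1}^m k_i$ disappears in the Zheng--Tse eigenvalue-exponent computation, and closes with the trivial direction that constrained outage cannot be smaller than unconstrained outage. The finite-blocklength random-coding achievability you sketch at the end is treated separately in the paper as Theorem 2 (Appendix C), not as part of this theorem's proof.
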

\begin{proof}
The outage probability is
\begin{align}
P_{out} (R) &= \mathop {\min }\limits_{f_{\mathbf{x}}
({\mathbf{x}})} P\left[ {I({\mathbf{x}};{\mathbf{y}}|{\mathbf{H}}) <
R}
\right]\notag\\
 &\leq P\left[ {\log \det \left( {{\mathbf{I}} + \frac{{\mathbf{P}}}
{m}  {\mathbf{H}}{\mathbf{H}}^\dag} \right) + \sum\limits_{i = 1}^m
{k_i }  < R} \right]
\notag\\
 &\doteq P\left[ {\log \det \left( {{\mathbf{I}} + SNR  {\mathbf{H}}{\mathbf{H}}^\dag} \right) + \sum\limits_{i = 1}^m {k_i }  < R} \right]
\label{a331}
\end{align}
where $I({\;};{\;})$ denotes the mutual information and
$f_{\mathbf{x}} ({\mathbf{x}})$ is the probability density function
of ${\mathbf{x}}$ subject to equations (\ref{a7}) and (\ref{a8}).
The inequality follows from (\ref{a29}) and (\ref{a331}) follows
from equation (9) in\cite{diva}. Using the same techniques as in
\cite{diva}, denoting $\lambda _i$ as the nonzero eigenvalues of
${\mathbf{HH}}^\dag$ and letting $R = r\log SNR$, $\sum\limits_{i =
1}^m {k_i }  = K$, $\lambda _i  = SNR^{ - \alpha _i }$, $(x)^ +
\triangleq \max (x,0)$, we have
\begin{align}
&P\left[ {\left( {\log \det \left( {SNR{\mathbf{H}} {\mathbf{H^\dag
+ I}}} \right) + K} \right) < r\log SNR} \right]
\notag\\
 = &P\left[ \left( {\prod\limits_{i = 1}^n {(1 + SNR\lambda _i )}  } \right) <  \frac{SNR^{r}}{e^{K}} \right]\;
\notag\\
 \doteq &P\left[ {\sum\limits_{i=1}^n {(1 - \alpha _i )^ +  }  < r} \right].\;
\label{a32}
\end{align}
Thus $P_{out} (R)\dot \leq$ (\ref{a32}) and (\ref{a32}) is
exponentially equal to the outage probability without PAPR
constraints in \cite{diva}. However, the outage probability with PAPR
constraints should be larger than the outage probability without PAPR
constraints, that is, $P_{out} (R)\dot \geq$ (\ref{a32}). Thus
$P_{out} (R) \doteq$ (\ref{a32}), and the optimal tradeoff remains
the same as the case without PAPR constraints.
\end{proof}
Intuitively, this result is not surprising, since the PAPR
constraints do not reduce the spatial degree of freedom and the
capacity ${C}$ grows like $C_\infty$ with increasing SNR.

To show that this optimal tradeoff can be achieved by a code with
finite code length, we adopt a similar method as in \cite{diva} by
choosing the input to be a random code drawn from {\it i.i.d}
distribution (\ref{1b}).
\begin{thm}  \label{thm2}
For $l \geq m + n - 1$, in Rayleigh fading channels with any PAPR
constraint $\rho>1$, the optimal D-MG tradeoff is achievable.
\end{thm}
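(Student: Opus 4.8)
The plan is to follow the random-coding argument of Zheng and Tse, adapting it to incorporate the PAPR constraint while preserving the achievability of the optimal tradeoff. The key observation is that we already know from the previous theorem that the outage-based optimal tradeoff is unchanged by any PAPR constraint $\rho > 1$; what remains is to show that a concrete finite-length code can \emph{achieve} this tradeoff, not merely that the outage exponent permits it.

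\textbf{Approach.} First I would choose the input distribution so that each entry of the $m \times l$ codeword matrix $\mathbf{X}$ is drawn \emph{i.i.d.} from the distribution labeled (\ref{1b}) in the paper (presumably a Gaussian-like or truncated/shaped distribution that already satisfies both the average-power constraint (\ref{a7}) and the PAPR constraint (\ref{a8})). The central point is that this distribution must simultaneously (i) obey the PAPR bound $|x_i|^2 / E[|x_i|^2] \le \rho$ almost surely, and (ii) retain enough ``Gaussian-like'' tail behavior that the capacity lower bound of Lemma~1 (and hence the multiplexing gain of Lemma~\ref{lem2}) still holds. I would then split the error probability into an outage term and a term conditioned on no outage:
\begin{equation}
P_e \;\le\; P_{out}(R) \;+\; P(\text{error}, \text{no outage}).
\label{splitbound}
\end{equation}
The first term is handled entirely by the previous theorem, which gives $P_{out}(R) \doteq SNR^{-d^*(r)}$. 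The work is therefore in bounding the second term.

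\textbf{Key steps.} To bound the conditional error term I would follow the pairwise-error-probability (PEP) union bound used in \cite{diva}: condition on the channel $\mathbf{H}$ lying in the no-outage set, apply a Chernoff/union bound over the $2^{Rl}$ codewords, and average over the codeword pair difference $\Delta \mathbf{X} = \mathbf{X} - \mathbf{X}'$. The requirement $l \ge m + n - 1$ enters exactly here: it guarantees that the difference codeword matrix $\Delta \mathbf{X}$ has full rank $m$ with high probability, so that the received signal separation $\|\mathbf{H}\,\Delta \mathbf{X}\|$ grows fast enough across all $\min(m,n)$ spatial dimensions to make the PEP decay at the optimal exponent. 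The crucial estimate is a bound on the moment-generating function or the density of $\Delta \mathbf{X} \Delta \mathbf{X}^\dagger$ near its singular configurations; because the entries are bounded (by the PAPR constraint) rather than Gaussian, I would show that the relevant tail/concentration estimates of \cite{diva} still go through, since boundedness only strengthens the concentration and the lower-tail behavior of the eigenvalues of $\Delta \mathbf{X}\Delta \mathbf{X}^\dagger$ is what controls the exponent.

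\textbf{Main obstacle.} The hard part will be verifying that the bounded (PAPR-constrained) input distribution preserves the full-rank / eigenvalue-spread properties of $\Delta \mathbf{X}$ that the Gaussian argument in \cite{diva} relies on. Specifically, I must ensure that with an \emph{i.i.d.} shaped distribution the probability that $\Delta \mathbf{X}$ is close to rank-deficient is polynomially (not just exponentially) controlled, so that the no-outage conditional error term stays $\dot\le SNR^{-d^*(r)}$ and does not dominate the outage term. Once the conditional PEP is shown to decay at least as fast as the outage probability, combining it with (\ref{splitbound}) yields $P_e \doteq SNR^{-d^*(r)}$, and since the outage lower bound forces $P_e \dot\ge SNR^{-d^*(r)}$, the random code achieves the optimal tradeoff for $l \ge m+n-1$, completing the proof.
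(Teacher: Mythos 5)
Your overall skeleton coincides with the paper's: a random code with \emph{i.i.d.} entries drawn from (\ref{1b}), the split $P_e \le P_{out}(R) + P(\text{error, no outage})$, a Chernoff/union bound on the pairwise error probability, and a reduction to equation (19) of \cite{diva}. However, you misplace the key technical step, and the route you propose for it would not obviously close. The paper does \emph{not} argue about full-rankness or near-rank-deficiency of $\Delta \mathbf{X}$; what it needs, and proves, is that the density of the entrywise difference $\hat{x}_i = x_i' - x_i$ of two independent draws from the truncated distribution (\ref{1b}) is dominated, up to constants independent of $P_i$, by a Gaussian-type density $\frac{a_i}{2\pi}\exp\left(-\frac{b_{\min}}{4}|\hat{x}_i|^2\right)$ (equations (\ref{4b})--(\ref{6b}), via a case analysis over $b_i>0$, $b_i=0$, $b_i<0$). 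That domination is exactly what allows the ensemble average of $\exp\left[-\frac{1}{4}\|H\Delta X\|^2\right]$ to be evaluated in the closed determinant form (\ref{7b}); after that, the union bound and the condition $l \ge m+n-1$ are used verbatim as in \cite{diva}, the condition entering through the final exponent optimization rather than through any high-probability full-rank statement about $\Delta\mathbf{X}$.

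Your heuristic that ``boundedness only strengthens the concentration'' points in the wrong direction: what controls the averaged Chernoff bound is an \emph{upper} bound on the difference density (anti-concentration near the origin and a Gaussian envelope everywhere), not tail concentration, and boundedness of the support alone gives no such bound --- a bounded distribution can pile mass near zero and destroy the exponent. It is the specific exponential form of (\ref{1b}) that saves the argument, and the delicate case is $b_i<0$ (which occurs when $1<\rho_i<2$), where the naive Gaussian comparison fails and one must exploit the bounded support together with the fact that $b_i\rho_i P_i$ is a constant to absorb the sign reversal into a constant factor. Without identifying and carrying out this density-domination step, your plan does not produce the determinant bound (\ref{7b}) and therefore does not reach the Zheng--Tse exponent in (\ref{8b}).
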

\begin{proof}
The proof is given in  Appendix \ref{appxc}.
\end{proof}


\section{Approximate Cubic Shaping}\label{sec acs}
In this section, we discuss the fundamental concepts of the shaping techniques we use to reduce the PAPR of existing D-MG optimal space-time codes. A constellation generally consists of a set of points on an
$l$-dimensional complex lattice, or an $L$-dimensional real lattice
$\lambda _L$ (where $L=2l$), that are enclosed within a finite
region $\xi _L$. The boundary of a signal constellation affects the average power and PAPR for a given transmitted
data rate.
In selecting the signal constellation, one tries to minimize the
average power with low PAPR. The $L$-dimensional constellation
consisting of all the points enclosed within an $L$-dimensional cube
is called cubic shaping, which leads to a PAPR value equal to $3$
when the constellation size approaches infinity. With the same
number of points to be transmitted, the reduction in the average
transmission power due to the use of a region $\xi _L$ as signal
constellation instead of a hypercube is referred to as the shaping
gain $\eta _s$ of $\xi _L$.  The region that has the smallest
average power for a given volume is an $L$-dimensional sphere.
Although the sphere shaping gives the best shaping gain, it also
results in high PAPR values when $L$ is large. Shaping of
multidimensional constellation has been extensively studied
previously \cite{mulc,tres,sham,onop}.
For our interest in PAPR reduction, we will focus on the cubic
shaping due to its good PAPR value asymptotically equal to $3$.

Consider the shaping on a general space-time code ${\mathbf{X}}$ in
the form of
\begin{equation}
{\mathbf{x}} = {\mathbf{Gs}},\;\;\;\;\;\;{\mathbf{s}} \in
\mathbb{Z}^M,{\mathbf{G}} \in \mathbb{R}^{M \times M}\label{a31}
\end{equation}
where ${\bf{x}}$ is an isomorphic vector representation of
${\bf{X}}$, ${\bf{G}}$ is an invertible generator matrix and
${\bf{s}}$ is the vector of information symbols chosen from
$M$-dimensional integer lattice $\mathbb{Z}^M$. A Quadrature
Amplitude Modulation (QAM) constellation is a subset of a
scaled integer lattice $\mathbb{Z}^M$.

For example, an $m\times m$ D-MG optimal
space-time code ${\textbf{X}}$ proposed in \cite{pers} can be
expressed in terms of $m$ vectors $\textbf{x}^{(i)}$,
$i=1,2,...,m$,
 \[\textbf{ x}^{(i)}= {\textbf{G}^{(i)}}\textbf{ s}^{(i)}\]
 \[\textbf{s}^{(i)}\in {\mathbb{Z}[i]}^{m}, \; \textbf{G}^{(i)}\in {\mathbb{C}^{m\times m}}      \]
where $\mathbb{Z}[i]$ stands for the Gaussian integers (i.e. $a+bi,
a, b\in \mathbb{Z}$) and each $\textbf{ x}^{(i)}$ corresponds to the
symbols in the space-time codeword matrix with positions corresponding to the nonzero elements' positions of
$\textbf{B}^{i-1}$, where
\[
 \textbf{B}=
\begin{pmatrix}
  0 &  0 & \cdots & 0 & 1  \\
  1 &  0 & \cdots & 0 & 0  \\
  0 &  1 & \cdots & 0 & 0  \\
    &    &  \vdots &  &    \\
  0 & 0  & \cdots &1 &0
\end{pmatrix}.
\]
Note that although these symbols are on different time and different
antennas, they have equal average power with respect to all
codewords owing to the code structure \cite{pers}.
 For each $\textbf{ s}^{(i)}$,
$\textbf{G}^{(i)}$, we can get the isomorphic representation by
separating the real and imaginary parts of $\textbf{ s}^{(i)}$ and
$\textbf{G}^{(i)}$ as follows,
\[
{\textbf{ s}'^{(i)}}  =
\begin{pmatrix}
\textbf{s}^{(i)}_{Re} \\
\textbf{s}^{(i)}_{Im}
\end{pmatrix}
\]
and
\[
 \textbf{G}'^{(i)}=
\begin{pmatrix}                
  \textbf{G}^{(i)}_{Re} &  -\textbf{G}^{(i)}_{Im}  \\
   \textbf{G}^{(i)}_{Im} &   \textbf{G}^{(i)}_{Re}
\end{pmatrix}.
\]
Then $\textbf{ x}'^{(i)}= {\textbf{G}'^{(i)}}\textbf{ s}'^{(i)}$,
where $\textbf{s}'^{(i)}\in {\mathbb{Z}}^{2m}, \;
\textbf{G}'^{(i)}\in {\mathbb{R}^{2m\times 2m}}$. This is exactly
the same form as (\ref{a31}).

Our goal is to shape the transmitted signals such that the
constellation region of $\bf{x}$ is cubic. However, as the constellation points of $\bf{x}$ have to be on the lattice that achieves the optimal D-MG tradeoff, the constellation region will not be exactly cubic. The idea is to shape by cosets. Since the information symbol $\bf{s}$'s lattice is more regular (integer lattice) and easier to label, cosets will be found in the domain of $\bf{s}$ using a basis corresponding (approximately) to the cubic basis for $\bf{x}$. The following two steps illustrate how this can be done.

\emph{\textbf{Step 1}}: Introduce a set of perturbation vectors $U$
such that each $\mathbf{u} \in U$ is a linear combination of vectors
${\mathbf{v}}^i$, $i=1 \ldots M$
\begin{equation}
{\mathbf{u}} = \alpha _1 {\mathbf{v}}^1  + \alpha _2 {\mathbf{v}}^2
+ ... + \alpha_M {\mathbf{v}}^M \label{a33}
\end{equation}
where ${\mathbf{\alpha }} = [\alpha _1 ,\alpha _2 ,...,\alpha _M ]^T
\in \mathbb{Z}^M$ and ${\mathbf{v}}^i$ has the properties that
\begin{equation}
\begin{split}
\;\;{\mathbf{Gv}}^i  &\triangleq {\mathbf{\bar v}}^i, \;\;\;\;i =
1,2,...,M\\
{\mathbf{\bar v}}^i  &= \left[ {\varepsilon _1,\varepsilon _2
,...,\mu _i,...,\varepsilon _M } \right]^T,\;\;|\mu _i | >  >
|\varepsilon _j |, \label{a34}
\end{split}
\end{equation}
and
\begin{equation}
|\mu _1 | \cong |\mu _2 | \cong ... \cong |\mu _M | . \label{a36}
\end{equation}
Let $(\mathbf{s}+\mathbf{u}), \forall \mathbf{u} \in U$, be the coset representing
the same information.

\emph{\textbf{Step 2}}: Choose $(\mathbf{s}+\mathbf{u^{*}})$ as the
vector of information symbols such that the transmitted signals,
${\mathbf{x}} = {\mathbf{G}}({\mathbf{s}} + {\mathbf{u}}^{*})$
consist of an approximate cubic constellation.  The possible
transmitted signals can be written as
\begin{align}
{\mathbf{G}}({\mathbf{s}} + {\mathbf{u}}) &= {\mathbf{Gs}} +
{\mathbf{Gu}} = {\mathbf{\bar s}} + {\mathbf{\bar u}}\; \notag\\
 &= {\mathbf{\bar s}} + (\alpha _1 {\mathbf{\bar v}}^1  + \alpha _2 {\mathbf{\bar v}}^2  + ... + \alpha _M {\mathbf{\bar v}}^M )
 \label{a35}
\end{align}
where ${\mathbf{\bar s}} \triangleq {\mathbf{Gs}}$, ${\mathbf{\bar
u}} \triangleq {\mathbf{Gu}}$. In this particular set $U$, each
${\mathbf{u}}$ causes relatively large perturbations on certain
elements of ${\mathbf{\bar s}}$ where the corresponding $\alpha _i
\ne 0$. If we treat $ \varepsilon _j$'s as $0$, to put $\mathbf{x}$
in the cubic constellation, ${\mathbf{u}}^{*}$ can be searched
accordingly by modulo operations. However, the mapping from $\bf{s}$
to $\bf{x}$ has to be reversible for successful decoding. In other
words, these approximations need to be reversible. In the following
two sections, we will propose two such mappings for approximate
cubic shaping.
An approximate hypercube
constellation leads to a low PAPR value of 3 when  $ \varepsilon
_j$'s  are relatively small, i.e., when the constellation is large
enough.
\section{approximate Cubic Shaping via Hermite Normal Form (HNF)
Decomposition}\label{sec HNF}
Firstly, we need to decide
${\mathbf{v}}^i,\;i = 1,2,...,M$, in (\ref{a33}). Consider a
partition ${\text{ }}\mathbb{Z}^M /\Lambda$, where the lattice $
\Lambda  = {\mathbf{Q}}\mathbb{Z}^M$, and ${\bf{Q}}$ is an $M
\times M$ {\it integer} matrix such that
\[
{\text{ }}\;\;\;{\mathbf{GQ}} \cong \sigma
{\mathbf{I}}.\;\;\;\;\;\;
\]
The approximation is due to the fact that $\mathbf{Q}$ is an integer matrix. If we
choose ${\mathbf{v}}^i$ as
\begin{align}
{\mathbf{v}}^i  &= {\mathbf{Qe}}^i \; \\
{\mathbf{e}}^i \; &= [0_{(1)},...,0_{(i-1)}, 1_{(i)}, 0_{(i+1)},...,0_{(M)}
]^T, \notag
\end{align}
clearly, ${\mathbf{v}}^i$ has the properties (\ref{a34}) and
(\ref{a36}) when $\sigma$ is reasonably large. Thus,
\begin{align}
{\mathbf{\bar v}}^i  &\triangleq {\mathbf{Gv}}^i \; =
{\mathbf{GQe}}^i \notag\\
 &= \;\left[ {\varepsilon _1 ,\varepsilon _2 ,...,\sigma _i ,...,\varepsilon _M } \right]^T ,\;\;|\sigma _i | >  > |\varepsilon _j |
\label{a38}\\
 &\cong \sigma {\mathbf{e}}^i .
\notag
\end{align}
Define $U \triangleq {\mathbf{Q}}\mathbb{Z}^M$. We can rewrite
(\ref{a35}) in terms of  coset ${\mathbf{s}} + U$
\begin{align}
{\mathbf{G}}{\text{(}}{\mathbf{s}} + U ) &=
{\mathbf{Gs}} + {\mathbf{GQ}}\mathbb{Z}^M \notag\\
 &= {\mathbf{\bar s}} + {\mathbf{GQ}}\mathbb{Z}^M
\notag\\
 &\cong {\mathbf{\bar s}} + \sigma \mathbb{Z}^M. \label{a39}
\end{align}
Approximate cubic shaping can be done by treating $\varepsilon _j$'s
as $0$ (equivalently, the approximation in (\ref{a39}) as equality),
then searching for ${\mathbf{u}}^{*} \in \textbf{Q}
\mathbb{{Z}}^{M}$ to put ${\mathbf{x}}$ in the approximate cubic
constellation.

A geometric interpretation of this shaping method is that we choose
${\mathbf{s}} + {\mathbf{u}}^{*}$ in a shaped constellation whose
boundary is a parallelotope defined along the columns of ${\bf{Q}}$.
Thus the signal boundary in the domain of ${\bf{x}}$ translates to
an approximate hypercube. In the following, we will describe the
shaping process in three parts: \emph{(1)}\emph{ determine}
$\mathbf{Q}$ \emph{(2)} \emph{find the coset leaders}
\emph{(3)}\emph{put} $\mathbf{x}$ \emph{in an approximate
hypercube}, which are derived in a different point of view from
similar works proposed in\cite{intc} and \cite{pear}. Note that
\cite{intc} and \cite{pear} deal with the PAPR of single-antenna
OFDM systems but not
D-MG optimal space-time coded systems whose transmitted signals need to be on certain lattices.

\textbf{\emph{(1)}\emph{ Determine}} $\mathbf{Q}$: The
number of cosets, $|\det ({\mathbf{Q}})|$ (which will manifest in
part (2)), must be large enough to support the target number of
points we want to transmit. Therefore, let
\[
\begin{gathered}
  \;\;\;\;\;\;\;\;\;\;\;\;\;\;\;\;{\mathbf{Q}} = \left[ {\tilde \sigma {\mathbf{G}}^{ - 1} } \right] \hfill \\
  \;\;\;\;\;\;\;\;\;\;\;\;\;\;\;\;\;|\det ({\mathbf{Q}})| \geq \sigma ^M  \hfill \\
\end{gathered}
\]
where $\left[ {\;} \right]$ denotes rounding, which makes the set of
perturbation vectors ${\bf{u}}$ belong to the integer lattice
$\mathbb{Z}^M$,
and $|\det ({\mathbf{Q}})|$ is the volume of the
parallelotope defined by ${\bf{Q}}$, or equivalently, the number of
points in the parallelotope.  $\sigma^M$ is the number of
transmitted points. The parameter $\tilde \sigma$ should be chosen
to be the smallest value that ensures the number of points in the
shaped constellation larger than the number of points in the
unshaped constellation, so no information will be lost. For the case we concern, $\mathbf{Q}$ can
always be chosen as a nonsingular matrix when $\tilde \sigma$ is
large enough.

\textbf{\emph{(2)} \emph{Find the coset leaders} }: The coset
leaders ${\bf{s}}$  must satisfy
\begin{equation}
\begin{gathered}
  \;\;\;\;\;\;\;\mbox{if}\;\;\;{\mathbf{s}}^i  \ne {\mathbf{s}}^j \;\;\;\;\;\;\;\;\;\;\;\;\;\; \hfill \\
  \;\;\;\mbox{then}\;\;\;{\mathbf{s}}^i  \ne {\mathbf{s}}^j  + {\mathbf{Qz}},\;\;\;\;\forall {\mathbf{z}} \in \mathbb{Z}^M
\label{a40}\end{gathered}
\end{equation}
where ${\mathbf{s}}^i ,\;{\mathbf{s}}^j$ are coset-leaders of two
different cosets $\;{\mathbf{s}}^i  + {\mathbf{Q}}\mathbb{Z}^M$,
$\;{\mathbf{s}}^j  + {\mathbf{Q}}\mathbb{Z}^M$, respectively, so
there is no ambiguity in decoding. As an example, consider the simplest case
when
\[
{\mathbf{Q}} = {\mathbf{D}} = {\text{diag}}(d_1 ,d_{_2 } ,...,d_M ).
\]
Denoting ${\bf{S}}$ as the set of coset leaders, it is natural to
choose ${\bf{S}}$ as
\begin{equation}
  {\mathbf{S}} \triangleq \{ {\mathbf{s}}|\;0 \leq s_i  < d_i \;,\;i = 1,2,...,M\}
\label{a41}
\end{equation}
where ${\mathbf{s}} = [s_1 ,s_2 ,...,s_M]^T$. Obviously, the
coset-leaders ${\mathbf{s}} \in {\mathbf{S}}$ satisfy (\ref{a40})
and $\mathbf{S}$ contains all the coset leaders. The number of coset
leaders is equal to $|\det ({\mathbf{D}})|$. For example,
\[
\mbox{if}\;\;{\mathbf{D}} = \left( {\begin{array}{*{20}c}
   1 & 0 & 0  \\
   0 & 2 & 0  \\
   0 & 0 & 3  \\

 \end{array} } \right)\;\mbox{then}\;\;{\mathbf{S}} = \left\{
 \begin{gathered}
 {}[0,0,0]^T ,\;[0,0,1]^T ,[0,0,2]^T  \hfill \\
  [0,1,0]^T ,\;[0,1,1]^T ,[0,1,2]^T  \hfill \\
\end{gathered}  \right\}
\]
and the number of coset-leaders in ${\bf{S}}$ is $\det
({\mathbf{D}}) = 6$.

For the general case when ${\bf{Q}}$ is not a diagonal matrix,
decompose ${\bf{Q}}$ into
\begin{equation}
{\mathbf{Q}} = {\mathbf{UDV}}
\end{equation}
where ${\mathbf{U,V}}$ are unimodular matrices (i.e. integer
matrices with $|\det ({\mathbf{U}})| = 1,{\text{ }}|\det
({\mathbf{V}})| = 1$).  The matrix ${\bf{D}}$ is called the Smith
Normal Form (SNF) of the matrix ${\bf{Q}}$ \cite{onsy}. We can first
index the coset leaders as (\ref{a41}), and left-multiply ${\bf{s}}$
by $\textbf{U}$ such that $ {\mathbf{Us}}$ is the coset leader of
${\mathbf{Us}} + {\mathbf{UD}}\mathbb{Z}^M$. Define
\begin{equation}
{\mathbf{S}}_{\mathbf{U}}  \triangleq \{ {\mathbf{Us}}|\;0 \leq s_i
< d_i,\;i = 1,2,...,M\}. \label{snfindex}
\end{equation}
Since ${\mathbf{U}}$ is a unimodular matrix, $
{\mathbf{S}}_{\mathbf{U}}$ contains all coset leaders of $
{\mathbf{Us}} + {\mathbf{UD}}\mathbb{Z}^M$, and
\begin{align}
  {\mathbf{Us}} + {\mathbf{UD}}\mathbb{Z}^M  &= {\mathbf{Us}} + {\mathbf{UD}}({\mathbf{V}}\mathbb{Z}^M ) \notag\\
   &= {\mathbf{Us}} + {\mathbf{Q}}\mathbb{Z}^M \;\label{a44}
\end{align}
where the second equality follows from the fact that the lattice
${\mathbf{V}}\mathbb{Z}^M$ is identical to the lattice
$\mathbb{Z}^M$ when ${\mathbf{V}}$ is a unimodular matrix. Thus, $
{\mathbf{S}}_{\mathbf{U}}$ contains all coset-leaders of $
{\mathbf{Us}} + {\mathbf{Q}}\mathbb{Z}^M$.

The SNF decomposition can be performed via column and row
operations, which generally have the problem of intermediate
expression swell. One can use modular arithmetic to control
expression swell \cite{como}.

After examining the above algorithms, we find that the
diagonalization of SNF decomposition is not necessary. Instead, we
can decompose ${\bf{Q}}$ as
\begin{equation}
{\mathbf{Q}} = {\mathbf{RV}}
\end{equation}
where ${\bf{V}}$ is unimodular and ${\bf{R}}$ is an integer lower
triangular matrix. There is a theorem that guarantees the
existence of the decomposition of ${\mathbf{Q}} = {\mathbf{RV}}$,
known as the Hermite Normal Form (HNF) \cite{intd}. The theorem is
stated here for completeness.
\\
\begin{thm}
Any $M \times M$ invertible integer matrix ${\bf{Q}}$  can be
decomposed into ${\mathbf{Q}} = {\mathbf{RV}}$, where ${\bf{V}}$ is
a unimodular matrix and ${\bf{R}}$ is an integer lower triangular
matrix.
\\
\end{thm}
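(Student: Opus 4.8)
The plan is to argue constructively by induction on $M$, driving $\mathbf{Q}$ into lower triangular form using only \emph{integer} column operations. Each admissible operation---swapping two columns, negating a column, or adding an integer multiple of one column to another---is exactly right-multiplication by an elementary unimodular matrix, and unimodular matrices are closed under products. Hence it suffices to exhibit a single unimodular $\mathbf{W}$ for which $\mathbf{Q}\mathbf{W}$ is lower triangular; setting $\mathbf{V} = \mathbf{W}^{-1}$ (again unimodular, since $|\det(\mathbf{W})|=1$) and $\mathbf{R} = \mathbf{Q}\mathbf{W}$ then yields $\mathbf{Q} = \mathbf{R}\mathbf{V}$ in the required form.

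The base case $M=1$ is trivial. For the inductive step the key engine is the Euclidean algorithm applied across the first row. Because $\mathbf{Q}$ is invertible, its first row $(q_{11},\ldots,q_{1M})$ is not identically zero, so its entries have a positive greatest common divisor $g$. Repeatedly subtracting an integer multiple of one column from another---precisely the reduction steps of the Euclidean algorithm, interleaved with column swaps to reposition the running remainder---reduces any pair of first-row entries to a $\gcd$ and a $0$, and iterating across the whole row transforms it into $(g,0,\ldots,0)$ while using only unimodular column operations. After this reduction $\mathbf{Q}$ has the block form $\left(\begin{smallmatrix} g & \mathbf{0}^T \\ \mathbf{c} & \mathbf{Q}' \end{smallmatrix}\right)$ for some $(M-1)\times(M-1)$ integer block $\mathbf{Q}'$.

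Since column operations preserve $|\det|$ and $g\neq 0$, expanding the determinant along the first row gives $\det(\mathbf{Q}')\neq 0$, so $\mathbf{Q}'$ is an invertible integer matrix of size $M-1$. The induction hypothesis supplies a unimodular $\mathbf{V}'$ with $\mathbf{Q}'\mathbf{V}'$ lower triangular. Embedding it as $\mathrm{diag}(1,\mathbf{V}')$ leaves the first column and the cleared first row untouched, because
\[
\begin{pmatrix} g & \mathbf{0}^T \\ \mathbf{c} & \mathbf{Q}' \end{pmatrix}\begin{pmatrix} 1 & \mathbf{0}^T \\ \mathbf{0} & \mathbf{V}' \end{pmatrix} = \begin{pmatrix} g & \mathbf{0}^T \\ \mathbf{c} & \mathbf{Q}'\mathbf{V}' \end{pmatrix},
\]
which is lower triangular. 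Composing the first-row Euclidean sweep with this embedded map produces the desired $\mathbf{W}$, completing the induction.

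The one point that genuinely requires care---and the reason this is not mere field-style Gaussian elimination---is that over $\mathbb{Z}$ we may not divide by a pivot; only integer combinations and swaps are permitted. Consequently the clearing of each row must pass through the Euclidean algorithm, and its termination (the strictly decreasing sequence of nonnegative remainders) is exactly what guarantees that finitely many unimodular operations suffice. The remaining obligation, namely that the deflated block $\mathbf{Q}'$ stays invertible so the induction applies at each stage, is routine bookkeeping following from multiplicativity of the determinant together with $g\neq 0$.
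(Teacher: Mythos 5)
Your proof is correct, but there is nothing in the paper to compare it against: the authors state this theorem ``for completeness'' and simply cite the literature on the Hermite Normal Form, offering no proof of their own. Your argument is the standard constructive one and is sound. The reduction to exhibiting a single unimodular $\mathbf{W}$ with $\mathbf{Q}\mathbf{W}$ lower triangular is valid because the inverse of a unimodular matrix is again an integer matrix of determinant $\pm 1$ (adjugate formula), so $\mathbf{V}=\mathbf{W}^{-1}$ qualifies; the Euclidean sweep on the first row terminates by the strictly decreasing remainders and needs only that the row is not identically zero, which invertibility guarantees; and the deflation step is justified since $\det(\mathbf{Q})=\pm\, g\det(\mathbf{Q}')\neq 0$ forces $\det(\mathbf{Q}')\neq 0$, so the induction hypothesis applies. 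One presentational remark: your induction hypothesis should be phrased as the reformulated claim (``there is a unimodular $\mathbf{W}'$ making $\mathbf{Q}'\mathbf{W}'$ lower triangular'') rather than as the theorem statement itself, which is what you in fact use when you embed $\mathrm{diag}(1,\mathbf{V}')$; this is a wording issue, not a gap. Your proof also delivers slightly more than the theorem asks (one can additionally normalize the diagonal of $\mathbf{R}$ to be positive by negating columns), which is consistent with the paper's later assumption that $r_{ii}\neq 0$ in (\ref{a50}).
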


Let $r_{ii}  \ne 0$ be the diagonal elements of ${\bf{R}}$. Then we
can form the set of coset-leaders, ${\mathbf{S}}$ as
\begin{equation}
{\mathbf{S}} = \{ {\mathbf{s}}|\;0 \leq s_i  < r_{ii} \}.\label{a50}
\end{equation}
The validity of this set of coset-leaders can be verified by the
following theorem.\\
\begin{thm} \label{theorem 2}
Given a matrix $\mathbf{Q}=\mathbf{R}\mathbf{V}$, the set
$\mathbf{S}$ defined in (\ref{a50}) contains all the coset leaders
of $\mathbf{s} + {\mathbf{Q}}\mathbb{Z}^M$.
\\
\end{thm}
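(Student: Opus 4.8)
The plan is to reduce the problem to the lower-triangular factor $\mathbf{R}$ and then produce, for each coset, its leader explicitly as the output of a coordinate-by-coordinate residue reduction, so that the leader is manifestly an element of $\mathbf{S}$. First I would exploit the unimodularity of $\mathbf{V}$: since $\mathbf{V}\mathbb{Z}^M=\mathbb{Z}^M$, we have $\mathbf{Q}\mathbb{Z}^M=\mathbf{R}\mathbf{V}\mathbb{Z}^M=\mathbf{R}\mathbb{Z}^M$, so the cosets $\mathbf{s}+\mathbf{Q}\mathbb{Z}^M$ coincide with the cosets $\mathbf{s}+\mathbf{R}\mathbb{Z}^M$. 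This lets me work with $\mathbf{R}$ alone, whose $j$-th column $\mathbf{R}_j$ vanishes in rows $1,\dots,j-1$ and has $r_{jj}$ as its first nonzero entry. I would normalize the diagonal to $r_{ii}>0$, as is standard for the HNF, so that the half-open box $\mathbf{S}=\{\mathbf{s}:0\le s_i<r_{ii}\}$ is well defined.

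Next I would define the leader of a coset $\mathbf{t}+\mathbf{R}\mathbb{Z}^M$ to be the representative obtained by forcing each coordinate, in increasing index order, to its least non-negative residue, and show that this reduction always terminates inside $\mathbf{S}$. Starting from any $\mathbf{t}\in\mathbb{Z}^M$, I set $z_1=\lfloor t_1/r_{11}\rfloor$ and subtract $z_1\mathbf{R}_1$, which places the first coordinate in $[0,r_{11})$. Because $\mathbf{R}_2$ has a zero first entry, subtracting $z_2\mathbf{R}_2$ (with $z_2$ the analogous quotient computed on the updated vector) places the second coordinate in $[0,r_{22})$ without disturbing the first. Iterating this top-down sweep yields an integer vector $\mathbf{z}=(z_1,\dots,z_M)^T$ with $\mathbf{t}-\mathbf{R}\mathbf{z}\in\mathbf{S}$. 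This is precisely the minimality property defining the leader: each coordinate is driven to the smallest non-negative value reachable modulo the lattice, and the triangular decoupling guarantees that minimizing coordinate $i$ never disturbs the already-minimized coordinates $1,\dots,i-1$.

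From this construction the theorem follows directly: every coset $\mathbf{t}+\mathbf{Q}\mathbb{Z}^M$ contains its leader $\mathbf{t}-\mathbf{R}\mathbf{z}$, which lies in $\mathbf{S}$, so $\mathbf{S}$ contains the leader of every coset, i.e. all the coset leaders of $\mathbf{s}+\mathbf{Q}\mathbb{Z}^M$. As a consistency check I would count points: $\mathbf{S}$ contains $\prod_i r_{ii}=|\det\mathbf{R}|=|\det\mathbf{Q}|$ integer vectors, exactly the number of cosets, confirming that the leaders it captures are distinct and exhaust all cosets, which is the condition (\ref{a40}) demanded of coset leaders.

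The hard part will be the decoupling argument: I must justify that the top-down sweep ends with all coordinates simultaneously in range, which rests entirely on $\mathbf{R}$ being lower triangular so that column $j$ acts trivially on coordinates $1,\dots,j-1$; for a general (non-triangular) $\mathbf{Q}$ the sweep would not terminate cleanly, which is exactly why the HNF factorization is invoked. I would also be careful with the sign convention, noting that choosing each $z_i$ as the floor quotient is what lands the leader in the half-open box $\{0\le s_i<r_{ii}\}$; a Euclidean-norm notion of ``smallest representative'' would not in general fall in this box, so the relevant minimality here is the per-coordinate least-residue minimality induced by the triangular reduction rather than minimality in norm.
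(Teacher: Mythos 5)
Your proof is correct, but it runs in the opposite direction from the paper's, so the two are genuinely dual arguments rather than the same one. Both begin identically, using $\mathbf{V}\mathbb{Z}^M=\mathbb{Z}^M$ to replace $\mathbf{Q}\mathbb{Z}^M$ by $\mathbf{R}\mathbb{Z}^M$, and both close with the same lattice-index count $|\mathbf{S}|=\prod_i r_{ii}=|\det\mathbf{R}|=|\det\mathbf{Q}|$; the difference is which half of the bijection between $\mathbf{S}$ and the cosets is proved directly and which is delegated to the count. The paper proves \emph{distinctness} directly: by induction on the coordinates, exploiting lower-triangularity, if $\mathbf{s}^i=\mathbf{s}^j+\mathbf{R}\mathbf{z}$ with both vectors in $\mathbf{S}$, then each difference $s_k^i-s_k^j$ lies in $(-r_{kk},r_{kk})$ yet must be a multiple of $r_{kk}$, forcing $z_k=0$ for all $k$; coverage of every coset is then inferred from the count. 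You instead prove \emph{coverage} directly, via the top-down floor-quotient sweep that places a representative $\mathbf{t}-\mathbf{R}\mathbf{z}\in\mathbf{S}$ in every coset, and infer distinctness from the count by pigeonhole on finite sets. Both routes are sound, and both lean on the same standard fact that the index of $\mathbf{R}\mathbb{Z}^M$ in $\mathbb{Z}^M$ is $|\det\mathbf{R}|$. What yours buys is constructiveness: your sweep is in effect the paper's own decoding recursion (\ref{a52}), so your proof doubles as a correctness argument for that algorithm; what the paper's buys is a direct verification of the no-ambiguity condition (\ref{a40}), which is the property the decoder actually relies on. Two of your side remarks are also well taken: the normalization $r_{ii}>0$ is glossed over by the paper (which writes only $r_{ii}\neq 0$, under which the box in (\ref{a50}) could be empty), and is harmless since one may flip the sign of a column of $\mathbf{R}$ together with the corresponding row of $\mathbf{V}$; and your diagnosis that triangularity is exactly what makes the sweep decouple is precisely why the HNF factorization is invoked in the first place.
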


\begin{proof}
From (\ref{a44}), the coset leaders of $\mathbf{s} +
{\mathbf{Q}}\mathbb{Z}^M$ are the coset leaders of ${\mathbf{s}} +
{\mathbf{R}}\mathbb{Z}^M$ since
\begin{align}
{\mathbf{s}} + {\mathbf{Q}}\mathbb{Z}^M  &= {\mathbf{s}} +
{\mathbf{R}}({\mathbf{V}}\mathbb{Z}^M )\notag\\
&= {\mathbf{s}} + {\mathbf{R}}\mathbb{Z}^M.
\end{align}
To show that each $\mathbf{s} \in \mathbf{S}$ is a valid coset leader, we
need to prove  that  for ${\mathbf{s}}^i ,\;{\mathbf{s}}^j \in
{\mathbf{S}}$, %
\begin{align*}
\mbox{if}\;\;\;{\mathbf{s}}^i  &= {\mathbf{s}}^j  + {\mathbf{Rz}}\;,\;{\mathbf{z}}\in \mathbb{Z}^M\\
\mbox{then}\;\;\;\;{\mathbf{s}}^i  &= {\mathbf{s}}^j .
\end{align*}
The proof goes by induction. Let ${\mathbf{z}} = [z_1 ,z_2 ,...,z_M
]^T$, $r_{ij}$ be the entries of ${\bf{R}}$. Note that from
(\ref{a50}), if $ {\mathbf{s}}^i  = {\mathbf{s}}^j + {\mathbf{Rz}}$,
then $s_1^i = s_1^j$, $z_{1}=0$. Suppose  $s_k^i = s_k^j$ for
$k=1,2,...m-1$ and $z_k  = 0$. Then
\begin{align*}
s_m^i  &= s_m^j  + \sum\limits_{k = 1}^m {z_k r_{mk} }=
s_m^j+z_m r_{mm} \\
 &= s_m^j
\end{align*}
which completes the induction. Finally, $\mathbf{S}$ contains all the
coset leaders since
$|\text{det}(\mathbf{Q})|=|\text{det}(\mathbf{R})|$. This completes
the proof.
\\
\end{proof}

\textbf{\emph{(3)}\emph{ Put}} $\mathbf{x}$ \textbf{\emph{in an
approximate hypercube}}: Since the coset leader in $\mathbf{S}$ is
not necessarily in the parallelotope enclosed by the columns of $
{\mathbf{Q}}$. We need to do the modulo-${\bf{Q}}$ operation in
(\ref{a51}) to put ${\mathbf{\tilde s}}$ in the shaped constellation
and transmit ${\mathbf{x}} = {\mathbf{G\tilde s}}$.
\begin{equation}
\begin{gathered}
  {\mathbf{\gamma }} = \left\lfloor {{\mathbf{Q}}^{ - 1} {\mathbf{s}}} \right\rfloor  \hfill \\
  {\mathbf{\tilde s}} = {\mathbf{s}} - {\mathbf{Q\gamma }} \hfill \\
\end{gathered}\label{a51}
\end{equation}
where $\left\lfloor {\;} \right\rfloor$ denotes the floor function.
As a side note, it is desirable to translate $ {\mathbf{\tilde s}}$ to
minimize the transmit power (i.e., make $E\left[ {\mathbf{x}}
\right] \cong 0$). In this paper, however, we only concern the shape
of the constellation.

Now we summarize the algorithm using HNF decomposition as
follows:\\
\textbf{\emph{Encoding }}: Let ${\bf{s}}$ defined in (\ref{a50})
be the canonical representation of an integer $I$ which represents
the data to be sent. $\bf{s}$ can be obtained by the following recursive
modulo operation
\begin{equation}
\begin{split}
s_1  &= I\;\bmod \;r_{11}\\
I_1  &= \frac{{I - s_1 }} {{r_{11} }}\\
s_i  &= I_{i - 1} \bmod \;r_{ii}\\
I_i  &= \frac{{I_{i - 1}  - s_i }} {{r_{ii} }}\;
\end{split}\label{a46}
\end{equation}
where $2 \leq i \leq M$. Then use the algorithm defined in
(\ref{a51}) and transmit ${\mathbf{x}} = {\mathbf{G\tilde s}}$.
\\
\emph{\textbf{Decoding}} : First, an estimate of $\mathbf{\tilde s}$ is obtained from the received
signal (using, e.g., sphere demodulation). Let ${\mathbf{r}}_i$ be the $i$-th column
of ${\bf{R}}$. The decoding algorithm can be arranged to be top-down
\begin{equation}
\begin{split}
s_1  &= \;\tilde s_1 \;\bmod \;r_{11} \;\;\;\;(s_1  = \;\tilde s_1
+\;q_1 r_{11} )\\
\mbox{for} \;\; i &= 2\;\;:\;\;{\text{M}}\\
{\mathbf{\tilde s}} &= \;{\mathbf{\tilde s}} + q_{i - 1}
{\mathbf{r}_{i-1}}\\
s_i  &= \;\tilde s_i \;\bmod \;r_{ii} \;\;\;\;(s_i  = \;\tilde s_i +
\;q_i r_{ii} )\\
&{\text{end}}\;
\end{split}\label{a52}
\end{equation}
%
%

%

Compared to the similar approaches proposed in \cite{pear}, our
method can save the multiplication of ${\mathbf{Us}}$ in
(\ref{snfindex}) in encoding, and half of the multiplications in
decoding due to the lower triangular matrix, although both schemes
have the same order of complexity $O(M^2 )$. Moreover, sometimes
${\bf{U}}$ may have exceedingly large entries. Our scheme only
requires ${\bf{R}}$ and it is more efficient to only compute
${\bf{R}}$ \cite{como}.
\section{approximate Cubic Shaping via Integer Reversible Matrix
Mapping}\label{sec PLUS} In practical communication systems, the
number of points in the constellation usually equals to a number that can be expressed by an integer number of bits. That is, the constellation has $2^K$ points, where $K$ is a positive integer. In Section \ref{sec HNF}, we chose ${\mathbf{Q}} = \left[
{\tilde \sigma {\mathbf{G}}^{ - 1} } \right] $ to ensure that
${\bf{Q}}$ is an integer matrix. However, $|\det ({\mathbf{Q}})|$
is generally not in the form of $ 2^K$ due to the rounding
operation. This leads to the inconvenience of using large
$I$ in the encoding procedure (\ref{a46}), since it can not be
expressed in terms of bits. To avoid this problem, we relax the
integer constraints on the entries of ${\bf{Q}}$ and consider a
nonlinear mapping. Let the unshaped constellation be a hypercube,
namely
\begin{equation}
\;{\mathbf{S}} = \{ {\mathbf{s}}|\;0 \leq s_i  < \sigma, \forall i \}.
\label{qams}
\end{equation}
Clearly, the total number of transmitted points is $\sigma ^M$ and
we can choose $\sigma = 2^{\left( {K/M} \right)} $. Transform
${\bf{S}}$ into a shaped constellation $ {\mathbf{S}}_{\mathbf{Q}}$
\begin{equation}
{\mathbf{S}}_{\mathbf{Q}}  = \{ {\mathbf{Qs}}|\;0 \leq s_i < \sigma, \forall i
\}. \label{a53}
\end{equation}
where ${\mathbf{Q}} = {\mathbf{G}}^{ - 1}$ and $|\det
({\mathbf{Q}})|$ is normalized to $1$. Then the ${\bf{x}}$-domain
shaped constellation $ {\mathbf{GS}}_{\mathbf{Q}}$ is transformed
back to a hypercube
\[
{\mathbf{GS}}_{\mathbf{Q}}  = \{ {\mathbf{GQs}} = {\mathbf{s}}|\;0
\leq s_i  < \sigma, \forall i \} .
\]
The problem of (\ref{a53}) is that ${\mathbf{Qs}} \notin
\mathbb{Z}^M$. This will destroy the optimality of the transmitted signal. Naturally,
one method to try is
\begin{equation}
\left[ {{\mathbf{S}}_{\mathbf{Q}} } \right] = \{ \left[
{{\mathbf{Qs}}} \right]|\;0 \leq s_i  < \sigma, \forall i \} . \label{a54}
\end{equation}
However, there is a possibility that, for ${\mathbf{s}}^i, {\mathbf{s}}^j  \in {\mathbf{S}}$,
\begin{equation}
  {\mathbf{s}}^i  \ne {\mathbf{s}}^j \;\;\;\;
  {\text{but}}\;\;\;\left[ {{\mathbf{Qs}}^i } \right] = \left[ {{\mathbf{Qs}}^j }
  \right].
\end{equation}
To resolve the ambiguity, we choose an integer to integer reversible
mapping\cite{matf}, through which valid shaped symbols can be found.
Furthermore, the shaped constellation will be similar to that using
(\ref{a54}).

Firstly we borrow some definitions from \cite{matf}. If there
exists an elementary reversible structure based on a matrix for
perfectly invertible integer implementation, the matrix is called
an elementary reversible matrix (ERM). Consider an upper or lower
triangular matrix $ \mathbf{A}$ whose diagonal elements are $j_i
= \pm 1$, a reversible integer mapping is defined as follows
\cite{matf}:

Let $ \textbf{A}$ be an $M \times M$ upper triangular matrix with
elements $\{ a_{mn} \}$, and  ${\mathbf{y}} = [{\mathbf{As}}]$, that is,
\begin{equation}
\begin{split}
  y_m  &= j_m s_m  + \left[ {\sum\limits_{n = m + 1}^M {a_{mn} s_n } } \right], m = 1,2,3,..., M - 1 \hfill \\
  y_M  &= j_M s_M . \hfill
\end{split}
\end{equation}
The inverse mapping from $\mathbf{y}$ to $\mathbf{s}$ is
\begin{equation}
\begin{split}
  s_M  &= s_M /j_M  \hfill \\
  s_m  &= (1/j_m )\left( {y_m  - \left[ {\sum\limits_{n = m + 1}^M {a_{mn} s_n } } \right]} \right) \hfill \\
  m &= M - 1, M - 2,...1 .\hfill
\end{split} \label{invu}
\end{equation}
Similar results can be obtained for a lower triangular matrix. This
kind of triangular matrix is called a triangular ERM (TERM). If all
the diagonal elements of a TERM equal to $1$, the TERM will be a
unit TERM. There is  another feasible ERM form known as the single-row
ERM (SERM) with $j_m = \pm 1$ on the diagonal and only one row of
off-diagonal elements are not all zeros. The reversible integer
mapping of SERM is straightforward:
\begin{equation}
\begin{split}
  y_{m'}  &= j_{m'} s_{m'}  + \left[ {\sum\limits_{n \ne m'}^M {a_{m'n} s_n } } \right],\;\;\;\;\mbox{for}\;\;m = m' \hfill \\
  y_m  &= j_m s_m, \;\;\;\;\;\;\;\;\;\;\;\;\;\;\;\;\;\;\;\;\;\;\;\;\;\;\;\;\;\;\;\;\;\;\;\; \mbox{otherwise} \hfill \\
\end{split}
\end{equation}
where $m'$ is the row with nonzero off-diagonal elements. The inverse
operation is
\begin{equation}
\begin{split}
  s_m \; &= y_m /j_m, \;\;\;\;\;\;\;\mbox{for}\;m \ne m' \hfill \\
  s_{m'}  &= \left( {y_{m'}  - \left[ {\sum\limits_{n \ne m'}^M {a_{m'n} s_n } } \right]\;} \right)\;/\;j_{m'}. \; \hfill \\
\end{split}  \label{invs}
\end{equation}
Denote ${\mathbf{S}}_0$ as a unit SERM with $m' = M$. It has been
shown in \cite{matf} that $\bf{Q}$ has a ``PLUS" factorization.
\\
\begin{thm}
Matrix ${\bf{Q}}$  has a TERM factorization of ${\mathbf{Q}} =
{\mathbf{PLUS}}_0$ if and only if $\det ({\mathbf{Q}}) = \det
({\mathbf{P}})=\pm 1$, where ${\bf{L}}$, ${\bf{U}}$  are unit lower
and unit upper TERMs, respectively, and ${\bf{P}}$ is a permutation
matrix subject to a possible negative sign.
\\
\end{thm}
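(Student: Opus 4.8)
The plan is to dispatch necessity by a determinant count and to establish sufficiency by an explicit pivoted elimination in which the single-row factor $\mathbf{S}_0$ absorbs exactly the discrepancy that would otherwise prevent both triangular factors from being \emph{unit}. For necessity, observe that $\mathbf{L}$, $\mathbf{U}$ and $\mathbf{S}_0$ are all unit (their diagonals are $1$), so $\det\mathbf{L}=\det\mathbf{U}=\det\mathbf{S}_0=1$; since $\mathbf{P}$ is a signed permutation, $\det\mathbf{P}=\pm1$, and multiplicativity of the determinant gives $\det\mathbf{Q}=\det\mathbf{P}=\pm1$ at once.

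For sufficiency, the first step is to recast the goal. Rewriting $\mathbf{Q}=\mathbf{P}\mathbf{L}\mathbf{U}\mathbf{S}_0$ as $\mathbf{M}:=\mathbf{P}^{-1}\mathbf{Q}\mathbf{S}_0^{-1}=\mathbf{L}\mathbf{U}$, I would invoke the classical criterion that a matrix admits a factorization into a unit lower and a unit upper triangular matrix if and only if all of its leading principal minors equal $1$ (the $k$-th pivot of the ordinary $\mathbf{LU}$ decomposition is $\Delta_k/\Delta_{k-1}$, so a unit diagonal on $\mathbf{U}$ is equivalent to $\Delta_k\equiv1$). Thus it suffices to produce a signed permutation $\mathbf{P}$ and a unit SERM $\mathbf{S}_0$ (special row $M$) so that every leading principal minor of $\mathbf{M}$ equals $1$; the factors $\mathbf{L}$ and $\mathbf{U}$ are then read off from $\mathbf{M}$ for free.

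The second step exploits the precise action of $\mathbf{S}_0$. Writing $\mathbf{S}_0=\mathbf{I}+\mathbf{e}_M\mathbf{c}^{T}$, where $\mathbf{e}_M$ is the $M$-th standard basis vector and $\mathbf{c}=(c_1,\dots,c_{M-1},0)^T$, the inverse is $\mathbf{S}_0^{-1}=\mathbf{I}-\mathbf{e}_M\mathbf{c}^{T}$, so right-multiplication by $\mathbf{S}_0^{-1}$ subtracts $c_j$ times the last column of $\mathbf{P}^{-1}\mathbf{Q}$ from its $j$-th column for $j<M$ and leaves column $M$ fixed. Consequently the leading $k\times k$ minor $\Delta_k(\mathbf{M})$ depends only on $c_1,\dots,c_k$, and by multilinearity it is in fact \emph{affine} in $c_k$ once the earlier $c_j$ are frozen. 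This makes the system $\Delta_1=\dots=\Delta_{M-1}=1$ lower-triangular in the unknowns $(c_1,\dots,c_{M-1})$, and I would solve it by forward substitution, determining $c_k$ from the equation $\Delta_k=1$ after $c_1,\dots,c_{k-1}$ are set. The remaining full minor $\Delta_M=\det\mathbf{M}=\det(\mathbf{P})^{-1}\det(\mathbf{Q})$ is untouched by $\mathbf{c}$, and since $\det\mathbf{Q}=\pm1$, flipping the lone sign carried by the signed permutation $\mathbf{P}$ forces $\Delta_M=1$.

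The hard part is guaranteeing that this forward substitution never stalls, i.e.\ that at each step the coefficient of $c_k$ in $\Delta_k$ is nonzero. That coefficient is, up to sign, the determinant of the $k\times k$ matrix formed from columns $1,\dots,k-1,M$ of $\mathbf{P}^{-1}\mathbf{Q}$ restricted to its first $k$ rows, so I must choose the permutation part of $\mathbf{P}$ so that all of these nested submatrices are nonsingular simultaneously. I would build the required row order greedily: since $\mathbf{Q}$ is nonsingular, the column sets $\{M\}\subset\{1,M\}\subset\dots\subset\{1,\dots,M-2,M\}$ are linearly independent, and a Steinitz exchange argument lets me extend a set of $k$ rows independent on columns $\{1,\dots,k-1,M\}$ to a set of $k+1$ rows independent on $\{1,\dots,k,M\}$ (the rank jumps by one, so some unused row lies outside the current span). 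Verifying that this extension is always available, and hence that a single permutation serves all $k$ at once, is the main obstacle and the only place where nonsingularity of $\mathbf{Q}$ is genuinely used beyond the determinant bookkeeping.
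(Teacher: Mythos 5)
Your argument is correct, but note that the paper does not prove this theorem at all: it is imported verbatim from the cited reference on reversible integer mapping ([matf]), so there is no in-paper proof to compare against. What you have reconstructed is essentially the standard argument from that source: necessity by multiplying determinants of the unit factors, and sufficiency by reducing to the classical fact that $\mathbf{M}=\mathbf{L}\mathbf{U}$ with \emph{both} factors unit triangular iff every leading principal minor of $\mathbf{M}$ equals $1$, then showing that the $M-1$ free parameters of the unit SERM $\mathbf{S}_0$ (which act on columns $1,\dots,M-1$ of $\mathbf{P}^{-1}\mathbf{Q}$ through its last column) can be fixed by forward substitution so that $\Delta_1=\dots=\Delta_{M-1}=1$, the row permutation being chosen so that the coefficient of $c_k$, namely $\pm\det\bigl((\mathbf{P}^{-1}\mathbf{Q})[1{:}k;\;1,\dots,k-1,M]\bigr)$, never vanishes. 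Your Steinitz-exchange construction of that permutation is sound: the $k$ already-chosen rows restricted to columns $\{1,\dots,k,M\}$ have rank exactly $k$ while the full matrix on those columns has rank $k+1$ by nonsingularity of $\mathbf{Q}$, so an unused row outside the current span always exists. Two small points worth tightening: (i) attach the optional sign of $\mathbf{P}$ to the \emph{last} row, so that flipping it to force $\Delta_M=+1$ leaves the already-solved equations for $\Delta_1,\dots,\Delta_{M-1}$ untouched; and (ii) state the base case explicitly ($\Delta_1$ requires a row of $\mathbf{Q}$ whose $M$-th entry is nonzero, which exists since the last column of a nonsingular matrix cannot be zero).
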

From (\ref{a53}), clearly, ${\bf{Q}}$ satisfies the property that
$\det ({\mathbf{Q}}) =  \pm 1$. Now, we summarize the shaping
algorithm using the  PLUS factorization.

\textbf{\emph{Encoding}} : In contrast to (\ref{a54}), we decompose
${\bf{Q}}$ into ${\mathbf{Q}} = {\mathbf{PLUS}}_0$ to obtain an
integer to integer reversible mapping.  The shaping algorithm is
\begin{align}
{\mathbf{\tilde s}} &= {\mathbf{P}}\left[ {{\mathbf{L}}\left[
{{\mathbf{U}}\left[ {{\mathbf{S}}_0 {\mathbf{s}}} \right]\;}
\right]\;\;} \right],\;\;\;\;\;\;\;\;\;{\mathbf{s}}\; \in
{\mathbf{S}}\;\label{a57}\\
\end{align}
where $\mathbf{S}$ is defined in (\ref{qams}). Then
${\mathbf{x}} = {\mathbf{G\tilde s}}$ is transmitted.

\emph{\textbf{Decoding}} : First, an estimate of $\mathbf{\tilde s}$ is obtained from the received
signal. Then the inverse operations (\ref{invu}),
(\ref{invs}) and $\bf{P}^{-1}$ are used to recover $\mathbf{s}$ from $\mathbf{\tilde s}$.\\

For a ${\mathbf{Q}} = {\mathbf{PLUS}}_0$ with a denotation of $
{\mathbf{e_{r}}}^{(i)}$  for the rounding error vector that results
from the transform of the $i$-th ERM, the total error due to reversible integer mapping is
\begin{equation}
 \;\;\;\;\;\; \;\;\;\;|{\mathbf{e_{r}}}| = |{\mathbf{P}}({\mathbf{e_{r}}}^{(3)}  + {\mathbf{Le_{r}}}^{(2)}  + {\mathbf{LUe_{r}}}^{(1)} )|
\label{a59}
\end{equation}
and ${\mathbf{\tilde s}} = {\mathbf{Qs}} + {\mathbf{e_{r}}}$. When
using (\ref{a57}) to shape the constellation, if we view it as a
linear operation (as the constellation becomes large, the effect of
rounding is relatively minor), we actually choose $ {\mathbf{v}}^i$
defined in (\ref{a34}) as
\begin{equation}
\begin{split}
{\mathbf{v}}^i  &= {\mathbf{P}}\left[ {{\mathbf{L}}\left[
{{\mathbf{U}}\left[ {{\mathbf{S}}_0 \sigma {\mathbf{e}}^i }
\right]\;} \right]\;\;} \right]\;\\
{\mathbf{e}}^i \; &= [0_{(1)}, ..., 0_{(i-1)}, 1_{(i)}, 0_{(i+1)}, ..., 0_{(M)}]^T. \;
\end{split}\label{a61}
\end{equation}
From (\ref{a61}),
\[
{\mathbf{Gv}}^i  = {\mathbf{\bar v}}^i  = \sigma {\mathbf{e}}^i  +
{\mathbf{Ge_r}}.
\]
Obviously, ${\mathbf{v}}^i$ satisfies the property (\ref{a34})
when $\sigma$ is large enough. Thus this method is also an
approximate cubic shaping described in Section \ref{sec acs}.
The complexity of (\ref{a57}) is about $O(M^2 )$, which is smaller
than $O(2M^2 )$ of (\ref{a51}). Moreover, if there is an efficient
algorithm to do the multiplication by ${\bf{Q}}$, the complexity can be further reduced. For example,
when ${\mathbf{Q}}$ is a discrete Fourier transform (DFT) matrix,
we can use a structure similar to FFT to obtain a more efficient
algorithm with complexity  $ O(M\log M)$\cite{intf}. The drawback
of this method is the accumulated rounding error. This leads to some signals with relatively
high PAPR. However, we can still expect that the shaped signals have low
PAPR values with high probability.

It is more convenient and better for shaping to use the
complex representation. Thus (\ref{a31}) becomes
\begin{equation}
{\mathbf{x}} = {\mathbf{Gs}},\;\;\;\;\;\;{\mathbf{s}} \in \left(
{\mathbb{Z}[i]} \right)^\frac{M}{2} ,{\mathbf{G}} \in \mathbb{C}^{\frac{M}{2} \times \frac{M}{2}}.
\end{equation}
When using the complex representation, the corresponding $j_m$ in
SERM and TERM can be $\pm 1$ or $\pm i$ and $ \left[
{\;} \right]$ denotes rounding the real and imaginary components
individually. The inverse operations (\ref{invu}), (\ref{invs}) still
work. There is a corresponding theorem \cite{matf} as follows
\\
\begin{thm}
Matrix ${\bf{Q}}$  has a factorization of $ {\mathbf{Q}} =
{\mathbf{PLD}}_{\mathbf{R}} {\mathbf{US}}_0$ if and only if $\det
({\mathbf{Q}}) = \det ({\mathbf{D}}_{\mathbf{R}} ) \ne 0$, where
${\mathbf{D}}_{\mathbf{R}}  = diag(1,1,...,1,e^{i\theta } )$,
${\mathbf{L}}, {\mathbf{U}}$ are  lower and  upper TERMs,
respectively, and ${\bf{P}}$ is a permutation matrix.
\label{thm6}\\
\end{thm}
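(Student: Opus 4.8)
The plan is to read Theorem~\ref{thm6} as the complex analogue of the real ${\mathbf{PLUS}}_0$ factorization proved above: the single new ingredient is the diagonal phase factor ${\mathbf{D}}_{\mathbf{R}}=\mathrm{diag}(1,\ldots,1,e^{i\theta})$, which is forced on us because the determinant of a complex matrix lies in $\mathbb{C}\setminus\{0\}$ rather than in the finite set $\{\pm 1\}$, so the part of $\det({\mathbf{Q}})$ that cannot be matched by the discrete determinants of ${\mathbf{P}},{\mathbf{L}},{\mathbf{U}},{\mathbf{S}}_0$ must be parked in one diagonal entry.

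Necessity is a determinant computation. If ${\mathbf{Q}}={\mathbf{PLD}}_{\mathbf{R}}{\mathbf{US}}_0$, every factor is invertible, hence $\det({\mathbf{Q}})\ne 0$. Moreover $\det({\mathbf{S}}_0)=1$ (unit SERM), $\det({\mathbf{P}})=\pm 1$ (permutation up to sign), and the diagonal entries $j_i$ of the TERMs ${\mathbf{L}},{\mathbf{U}}$ lie in $\{\pm 1,\pm i\}$, so $\det({\mathbf{L}})$ and $\det({\mathbf{U}})$ are fourth roots of unity; since the factorization constructed below is normalized so that $\det({\mathbf{P}})\det({\mathbf{L}})\det({\mathbf{U}})=1$, we obtain $\det({\mathbf{Q}})=\det({\mathbf{D}}_{\mathbf{R}})=e^{i\theta}$.

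For sufficiency, suppose $\det({\mathbf{Q}})=e^{i\theta}\ne 0$; note this forces $|\det({\mathbf{Q}})|=1$, which is exactly the normalization ${\mathbf{Q}}={\mathbf{G}}^{-1}$ with $|\det({\mathbf{Q}})|=1$ used in~(\ref{a53}). I would set ${\mathbf{D}}_{\mathbf{R}}=\mathrm{diag}(1,\ldots,1,e^{i\theta})$ and carry the phase directly as the last pivot of a pivoted $LDU$-type elimination. Concretely, choose a unit SERM ${\mathbf{S}}_0$ acting on the last row and perform the factorization ${\mathbf{Q}}{\mathbf{S}}_0^{-1}={\mathbf{P}}{\mathbf{L}}{\mathbf{D}}{\mathbf{U}}$ with ${\mathbf{L}}$ unit lower triangular, ${\mathbf{U}}$ unit upper triangular, and ${\mathbf{D}}=\mathrm{diag}(d_1,\ldots,d_M)$ the pivots, aiming for $d_1,\ldots,d_{M-1}\in\{\pm 1,\pm i\}$ and $d_M=e^{i\theta}$. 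Since $\det({\mathbf{Q}}{\mathbf{S}}_0^{-1})=\det({\mathbf{Q}})=e^{i\theta}$, the pivot product is constrained consistently with this aim. Rounding is taken componentwise on real and imaginary parts as in the complex ERM definition, so every elementary step remains exactly invertible.

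The last step, splitting off ${\mathbf{D}}_{\mathbf{R}}$, is then routine. Writing ${\mathbf{D}}={\mathbf{D}}_1{\mathbf{D}}_{\mathbf{R}}$ with ${\mathbf{D}}_1=\mathrm{diag}(d_1,\ldots,d_{M-1},1)$, the diagonals commute, and absorbing ${\mathbf{D}}_1$ into ${\mathbf{L}}$ gives a genuine lower TERM ${\mathbf{L}}'={\mathbf{L}}{\mathbf{D}}_1$ whose diagonal lies in $\{1,\pm 1,\pm i\}$. Hence ${\mathbf{Q}}{\mathbf{S}}_0^{-1}={\mathbf{P}}{\mathbf{L}}'{\mathbf{D}}_{\mathbf{R}}{\mathbf{U}}$, i.e.\ ${\mathbf{Q}}={\mathbf{P}}{\mathbf{L}}'{\mathbf{D}}_{\mathbf{R}}{\mathbf{U}}{\mathbf{S}}_0$, the required form. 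The main obstacle is the preceding pivot-normalization: guaranteeing that ${\mathbf{S}}_0$ can always be chosen so that the pivoted elimination of ${\mathbf{Q}}{\mathbf{S}}_0^{-1}$ leaves all but one pivot reducible to $\{\pm 1,\pm i\}$, with the residual discrete determinant of ${\mathbf{P}}$, ${\mathbf{L}}'$ and ${\mathbf{U}}$ multiplying to $1$ so that $\det({\mathbf{Q}})=\det({\mathbf{D}}_{\mathbf{R}})$. This is the complex counterpart of the pivot step in the real ${\mathbf{PLUS}}_0$ theorem and is the delicate combinatorial part; since the result is quoted from \cite{matf}, I would invoke that reference for the existence of the required ${\mathbf{S}}_0$ and confine the present argument to the determinant bookkeeping and the diagonal split that places ${\mathbf{D}}_{\mathbf{R}}$ in the middle.
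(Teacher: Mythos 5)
The paper does not actually prove Theorem~\ref{thm6}: it is quoted verbatim from the reference \cite{matf} (Hao and Shi's work on matrix factorizations for reversible integer mapping), introduced only with the phrase ``there is a corresponding theorem.'' So there is no in-paper proof to match your argument against; the relevant comparison is whether your sketch would stand on its own. It does not quite, and by your own admission the missing piece is exactly where all of the difficulty lives: the claim that ${\mathbf{S}}_0$ can always be chosen so that a pivoted elimination of ${\mathbf{Q}}{\mathbf{S}}_0^{-1}$ leaves every pivot but one reducible to $\{\pm 1,\pm i\}$. That existence statement is the theorem; the determinant bookkeeping and the splitting ${\mathbf{D}}={\mathbf{D}}_1{\mathbf{D}}_{\mathbf{R}}$ with ${\mathbf{D}}_1$ absorbed into ${\mathbf{L}}$ are routine consequences once it is granted. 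Deferring that step to \cite{matf} makes your writeup a gloss on the citation rather than a proof --- which, to be fair, is no less than the paper itself offers, but it should be labelled as such rather than presented as a proof with one ``delicate'' step outstanding.

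Two smaller points. First, your necessity argument is circular as written: to prove the ``only if'' direction you must show that \emph{any} factorization of the stated form forces $\det({\mathbf{Q}})=\det({\mathbf{D}}_{\mathbf{R}})\ne 0$, so you cannot appeal to ``the factorization constructed below is normalized so that $\det({\mathbf{P}})\det({\mathbf{L}})\det({\mathbf{U}})=1$.'' The honest statement is that invertibility of every factor gives $\det({\mathbf{Q}})\ne 0$, and that the product of the remaining determinants is a fourth root of unity (times $\pm 1$ from ${\mathbf{P}}$), which can be folded into the phase $e^{i\theta}$ of ${\mathbf{D}}_{\mathbf{R}}$; equality of the two determinants then holds for a suitable choice of $\theta$, not automatically. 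Second, your reading that the hypothesis forces $\lvert\det({\mathbf{Q}})\rvert=1$ is an artifact of the paper writing the last entry of ${\mathbf{D}}_{\mathbf{R}}$ as $e^{i\theta}$ (which matches the normalization $\lvert\det({\mathbf{Q}})\rvert=1$ of (\ref{a53})); the general result in \cite{matf} places $\det({\mathbf{Q}})$ itself, of arbitrary nonzero modulus, in that entry. Your argument is fine under the paper's normalization, but you should flag that you are using it.
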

If $\det ({\mathbf{Q}}) =  \pm 1$ or $\pm i$, we have a
simplified factorization,  $ {\mathbf{Q=PLUS}}_0$. It is in fact a
generalization of the lifting schemes in \cite{newn}.

When $\det ({\mathbf{Q}}) = e^{i\theta }$ is not equal to $\pm 1$ or
$\pm i$, a complex rotation $
e^{i\theta }$ can be implemented with the real and imaginary
components of a complex number and factorized into three unit TERMs as
\[
\begin{gathered}
  \left( {\begin{array}{*{20}c}
   {\cos \theta } & { - \sin \theta }  \\
   {\sin \theta } & {\cos \theta }  \\

 \end{array} } \right) = \left( {\begin{array}{*{20}c}
   1 & 0  \\
   {\left( {1 - \cos \theta  } \right)/\sin \theta  } & 1  \\

 \end{array} } \right)\left( {\begin{array}{*{20}c}
   1 & { - \sin \theta }  \\
   0 & 1  \\

 \end{array} } \right) \\
  \;\;\;\;\;\;\;\;\;\;\;\;\;\;\;\; \cdot \;\left( {\begin{array}{*{20}c}
   1 & 0  \\
   {\left( {1 - \cos \theta } \right)/\sin \theta } & 1  \\

 \end{array} } \right) \\
\;\;\;
\;\;\;\;\;\;\;\;\;\;\;\;\;\;\;\;\;\;\;\;\;\;\;\;\;\;\;\;\;\;\;\;\;
   = \left( {\begin{array}{*{20}c}
   1 & {\left( {\cos \theta  - 1} \right)/\sin \theta }  \\
   0 & 1  \\

 \end{array} } \right)\left( {\begin{array}{*{20}c}
   1 & 0  \\
   \sin \theta & 1  \\

 \end{array} } \right) \\
  \;\;\;\; \;\;\;\;\;\;\;\;\;\;\;\;\;\;\cdot \;\left( {\begin{array}{*{20}c}
   1 & {\left( {\cos \theta  - 1} \right)/ \sin \theta }  \\
   0 & 1  \\

 \end{array} } \right). \\
\end{gathered}
\]
 Therefore, Theorem \ref{thm6}
shows that given a nonsingular matrix, we can always derive an
integer reversible mapping by a factorization, which is  what we
need for constellation shaping.
\section{Simulation Results}\label{sec sr}
 In this section, we present simulation
results for shaping of space-time codes designed in \cite{pers}
and \cite{perb} by using $10^{6}$ randomly generated symbols.
Since the signals transmitted by the antennas have similar
statistical distributions, the simulation results are presented as
the average complementary cumulative density function (CCDF) of
the PAPR of signals on each antenna $i$, expressed as follows:
\begin{equation}
CCDF\{PAPR(x_{i})\}=P\{PAPR(x_{i})>\rho_{i}\}, \label{a4.1}
\end{equation}
where $PAPR(x_{i})\triangleq\frac{{|x_i |^2 }} {{E_{x_i } [|x_i |^2
]}}$. This can be interpreted as the probability that the PAPR of a
symbol $x_{i}$ exceeds a certain PAPR constraint, $\rho_{i}$.

We first look at the $4\times 4$ space-time code designed in
\cite{perb}, which achieves the D-MG tradeoff. Fig. \ref{fig4.1}
shows the CCDF of the PAPR on $4$ antennas using the HNF and PLUS
approximate cubic shaping introduced in Section \ref{sec HNF} and
Section \ref{sec PLUS}, respectively. The effect of the
constellation size is also investigated. When the constellation size
is moderate (64 QAM), it is observed that the HNF shaping method
results in about 1.3dB larger reduction in the PAPR than the PLUS
shaping, which provides about 2dB PAPR reduction. The PLUS shaping has a worse performance due to the accumulation of rounding errors (\ref{a59}).
As the constellation size becomes large (dense), we can expect that
the rounding error becomes relatively small, and both methods' PAPR
will approach the optimal value for cubic shaping, namely
$10log3=4.78$dB. This trend is shown by the curves of the PLUS
shaping. The HNF shaping result with 256 QAM was not obtained due to
its excessively high computational complexity. Table ~\ref{tb4.1} shows the
increased average power (compared to the average power without
shaping) due to the few points outside the hypercube. As the
constellation size becomes large (and more cubic), the power increment
decreases.
\begin{figure}[htbp]
\centering
\includegraphics[scale=1, width=0.5\textwidth]{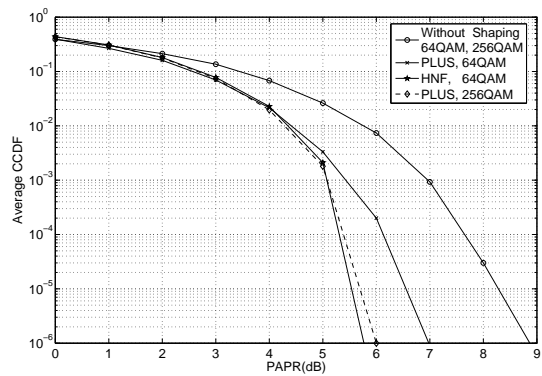}
\caption{CCDF of PAPR for a $4\times 4$ space-time code \cite{perb} using HNF and PLUS approximate cubic shaping.}
\label{fig4.1}
\end{figure}

\begin{table}
\centering
\begin{tabular}[htbp]{|c|c|c|}
\hline
 & HNF shaping & PLUS shaping \\
\hline
64QAM & 4.9\% & 4.6\% \\
256QAM &  & 3.5\% \\
 \hline
\end{tabular}
\caption{Increased average power for a $4\times 4$ space-time code \cite{perb}
using HNF and PLUS approximate cubic shaping.} \label{tb4.1}
\end{table}

In Fig. \ref{fig4.2}, we investigate the $5\times5$ space-time
code given in \cite{pers} which also achieves the D-MG tradeoff.
Similar trends as in the $4 \times 4$ case can be observed.
\begin{figure}[htbp]
\centering
\includegraphics[scale=1, width=0.5\textwidth]{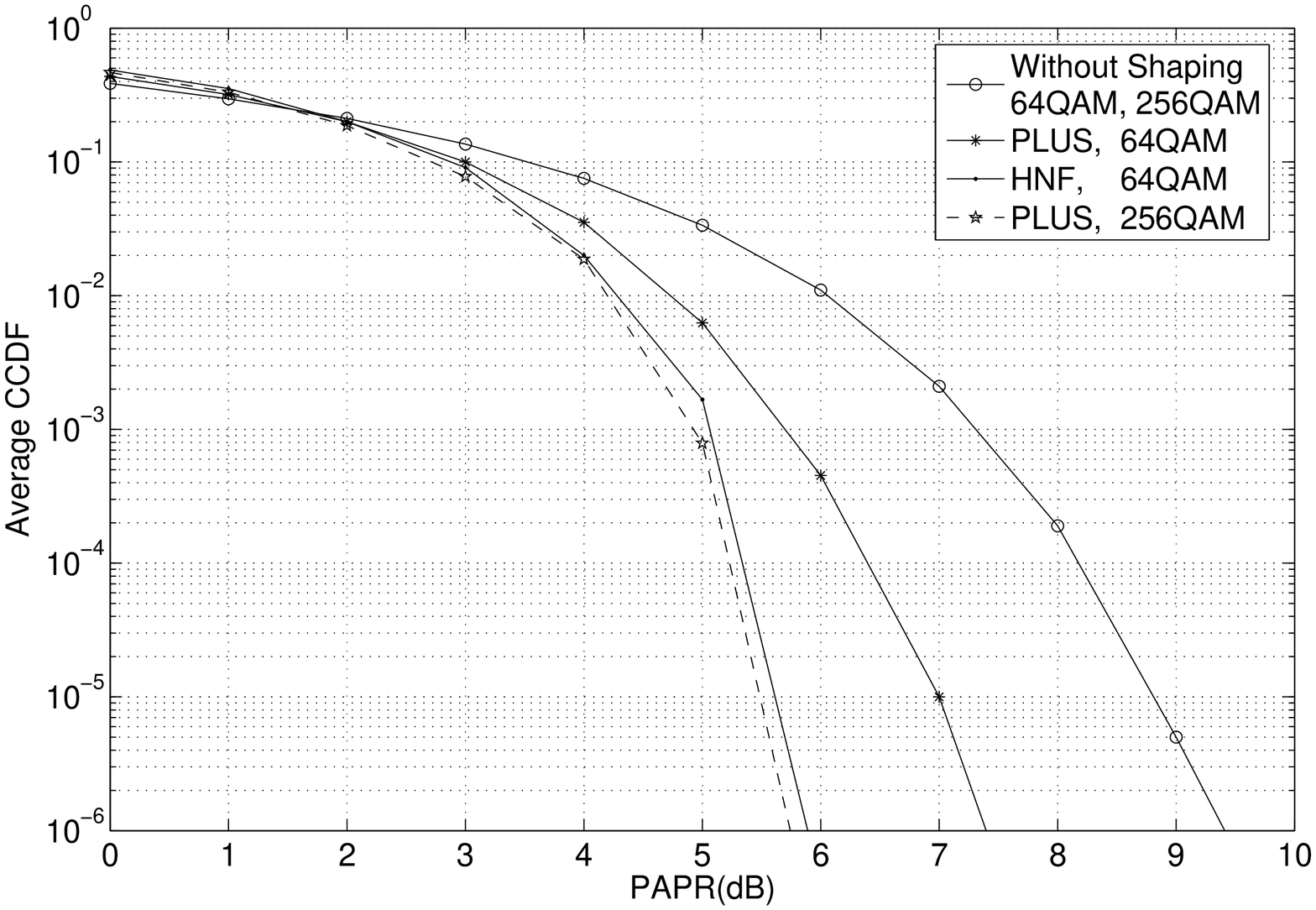}
\caption{CCDF of PAPR for a $5\times 5$ space-time code \cite{pers} using HNF
and PLUS approximate cubic shaping.} \label{fig4.2}
\end{figure}
\begin{table}
\centering
\begin{tabular}[htbp]{|c|c|c|}
\hline
 & HNF shaping & PLUS shaping \\
\hline
64QAM & 5.4\% & 4.8\% \\
256QAM & & 3.2\% \\
 \hline
\end{tabular}
\caption{Increased average power for a $5\times 5$ space-time code \cite{pers}
using HNF and PLUS approximate cubic shaping.} \label{tb4.2}
\end{table}

Finally, Fig. \ref{fig4.3} presents the codeword error probability (CEP)
of systems with $4$ or $5$ receive antennas and $4$ or $5$ transmit
antennas in quasi-static Rayleigh fading channels. Here, we use the
perfect space-time codes in \cite{pers} and \cite{perb} for the
$4\times4$ and $5\times5$ channels, respectively. The codeword sizes are
also $4\times4$ and $5\times5$ symbols, respectively. The sphere decoder in
\cite{aunified} is used for lattice decoding. The results show that
the space-time codes after shaping yield almost indistinguishable
error performance compared to the performance without shaping.

\begin{figure}[htbp]
\centering
\includegraphics[scale=1, width=0.5\textwidth]{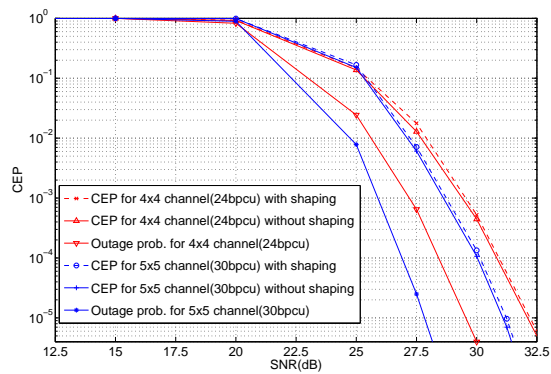}
\caption{Codeword error probability for Rayleigh fading channel with or without shaping.}
\label{fig4.3}
\end{figure}

\section{Conclusion}\label{sec conclusion}
In this paper, we first showed that, for Rayleigh fading channels, the D-MG tradeoff remains unchanged with {\it any} PAPR constraints larger than one. This result implies that, instead of designing codes on a case-by-case basis, as done by most existing works, there possibly exist general methodologies for designing space-time codes with low PAPR that achieve the optimal D-MG tradeoff. As an example of such methodologies, we proposed a PAPR reduction method based on constellation shaping that can be applied to existing optimal space-time codes without affecting
their optimality in the D-MG tradeoff. Unlike most PAPR reduction methods, the proposed method does not introduce redundancy or require side information being transmitted to the decoder.
Two realizations of the proposed method were considered. The first utilizes the Hermite Normal Form decomposition of integer matrices. The second utilizes the
integer reversible mapping.
Compared to the previous works
\cite{intc}\cite{pear} which applied a similar approach (Smith
Normal Form) to the single-antenna OFDM systems, the proposed
method has lower complexities. In addition, even though \cite{intc} managed to reduce the complexity to the same
order $O(MlogM)$ as the proposed integer reversible mapping scheme (in the single-antenna OFDM case)
by using a Hadamard matrix, that approach affects the PAPR reduction
capability and only works for OFDM systems. The proposed
method, on the other hand, works for any nonsingular generator
(modulation) matrix and can achieve better PAPR reduction. Sphere
decoding was performed to verify that the proposed PAPR reduction
method does not affect the optimality of space-time codes.
%

\appendices
\section{Proof of Lemma 1} \label{appxa}
Following the method in \cite{capo}, since the receiver knows the
realization of ${\bf{H}}$, the channel output is the pair $
({\mathbf{y, H}})$. The mutual information between channel input and
output is then
%
\begin{equation}
I\left( {{\mathbf{x;(y,H)}}} \right) = I{\mathbf{(x;H)}} +
I({\mathbf{x;y|H}}) = I({\mathbf{x;y|H}}). \label{a9}
\end{equation}
Denote $h(x)$ as the differential entropy of ${x}$ and let $H$ be a
particular realization of ${\bf H}$. 
For this ${H}$, when the SNR is asymptotically large, the output
differential entropy $h({\mathbf{y|H}} =
 H)$ can be well approximated by the input differential entropy $h({\mathbf{x|H}} =
 H)$. In addition,
 \begin{align}
 I({\mathbf{x;y|H}} = H) &= h({\mathbf{x|H}} = H) - h({\mathbf{x|y}},{\mathbf{H}} =
 H)\label{a10}\\
 &= h({\mathbf{x|H}} = H) - h({\mathbf{e}}|{\mathbf{y}},{\mathbf{H}} =H),
 \label{a11}
\end{align}
where $\;{\mathbf{e}} \triangleq {\mathbf{x}} - {\mathbf{F}}_{MMSE}
{\mathbf{y}}$, and ${\mathbf{F}}_{MMSE} $ is the minimum mean-square
error (MMSE) estimation filter of ${\bf{x}}$ given ${\bf{y}}$.

Since the lemma is to lower bound the ergodic channel capacity, any rate achieved by a particular signal can serve as a lower bound. We select the transmitted signal $\mathbf{x}$ such that $E\left[\mathbf{x}\right]=0$ and ${\mathbf{S}}_{{\mathbf{xx}}} \triangleq E\left[ {{\mathbf{xx}}^\dag  } \right]$ is positive definite.\footnote{Since space-time codes are open-loop solutions for which the transmitter does not have the channel state information, with identical complex Gaussian distributions of the fading coefficients among antennas (as assumed in Section \ref{systemdef}), a reasonable selection is to distribute the transmission power evenly on all the transmit antennas, and let $E\left[\mathbf{x}\right]=0$ for power efficiency. Together with additional selections, for example, simply letting the entries of $\mathbf{x}$ be independent of one another, ${\mathbf{S}}_{{\mathbf{xx}}}$ becomes positive definite (when the average total transmission power is not zero).}
In the following, we will compute the rate achievable by signals with these properties. This achievable rate obviously lower bounds the capacity.

Denote ${\mathbf{S}}_{{\mathbf{xy}}} \triangleq E\left[ {{\mathbf{xy}}^\dag
} \right]$.
According to the Orthogonality Principle, we have
\begin{align}
{\mathbf{F}}_{MMSE} \; &= {\mathbf{S}}_{{\mathbf{xy}}}
{\mathbf{S}}_{{\mathbf{yy}}}^{{\mathbf{ - 1}}} \;\notag\\
 &= {\mathbf{S}}_{{\mathbf{xx}}} H^\dag
{\mathbf{(}}H{\mathbf{S}}_{{\mathbf{xx}}} H^\dag  {\mathbf{ +
I)}}^{{\mathbf{ - 1}}} \notag\\
&= {\mathbf{(}}H^\dag  H{\mathbf{ + S}}_{{\mathbf{xx}}}^{{\mathbf{ -
1}}} {\mathbf{)}}^{{\mathbf{ - 1}}} H^\dag \label{a12}
\end{align}
where
the matrix inversion lemma
\[{\mathbf{(A +
BCD)}}^{{\mathbf{ - 1}}} {\mathbf{ = A}}^{{\mathbf{ - 1}}} {\mathbf{
- A}}^{{\mathbf{ - 1}}} {\mathbf{B(C}}^{{\mathbf{ - 1}}} {\mathbf{ +
DA}}^{{\mathbf{ - 1}}} {\mathbf{B)}}^{{\mathbf{ - 1}}}
{\mathbf{DA}}^{{\mathbf{ - 1}}}
\]
is used. $E\left[ {{\mathbf{ee}}^\dag  } \right]$ can be computed as
\begin{align}
E\left[ {{\mathbf{ee}}^\dag  } \right] &=
{\mathbf{S}}_{{\mathbf{xx}}} {\mathbf{ - S}}_{{\mathbf{xx}}} H^\dag
{\mathbf{(}}H{\mathbf{S}}_{{\mathbf{xx}}} H^\dag  {\mathbf{
+ I)}}^{{\mathbf{ - 1}}} H{\mathbf{S}}_{{\mathbf{xx}}} \notag\\
&= {\mathbf{(}}H^\dag  H{\mathbf{ + S}}_{{\mathbf{xx}}}^{{\mathbf{ -
1}}} {\mathbf{)}}^{{\mathbf{ - 1}}} \label{a13}
\end{align}
where the matrix inversion lemma is again used.
Note that $E[{\mathbf{e}}]=0$ since $E[{\mathbf{x}}]=0$ and $E[{\mathbf{y}}]=0$. Thus the covariance matrix of ${\bf{e}}$,
denoted $ {\mathbf{Cov}}[{\mathbf{e}}]$,
is equal to $E[{\mathbf{ee}}^\dag  ]$. Then we have

%
%
%
%
\begin{equation}
\begin{split}
h({\mathbf{e}}|{\mathbf{y}},{\mathbf{H}} = H) \leq
h({\mathbf{e}}|{\mathbf{H}} = H) & \leq \log \det \left( {\pi
e{\mathbf{Cov}}[{\mathbf{e}}]} \right)
 \\ & = \log \det \left( {\pi eE[{\mathbf{ee}}^\dag  ]}
 \right)\;
 \label{a15}
 \end{split}.
 \end{equation}
Define 
\begin{align}
\tilde I({\mathbf{x;y|H}} = H) &\triangleq h({\mathbf{x|H}} = H) -
\log \det \left( {\pi eE[{\mathbf{ee}}^\dag  ]}
 \right)\notag\\
 &= h({\mathbf{x|H}} = H) + \log \det \left( {\frac{1}
{{\pi e}}{\mathbf{(}}H^\dag  H{\mathbf{ +
S}}_{{\mathbf{xx}}}^{{\mathbf{ - 1}}} {\mathbf{)}}} \right).
\label{a16}
\end{align}
Obviously, $\tilde I({\mathbf{x;y|H}} = H) \leq I({\mathbf{x;y|H}} =
H)$. We have the ergodic capacity
\[\begin{gathered}
  C = \mathop {\max }\limits_{f_{\mathbf{x}} ({\mathbf{x}})} I\left( {{\mathbf{x;(y,H)}}} \right)\mathop  = \limits^{(\ref{a9})} \mathop {\max }\limits_{f_{\mathbf{x}} ({\mathbf{x}})} \;E_{\mathbf{H}} \left[ {I({\mathbf{x;y|H}})} \right] \hfill \\
  \;\;\;\mathop  = \limits^{(\ref{a11})} \mathop {\max }\limits_{f_{\mathbf{x}} ({\mathbf{x}})} \;E_{\mathbf{H}} \left[ {h({\mathbf{x|H}}) - h({\mathbf{e}}|{\mathbf{y}},{\mathbf{H}})} \right] \hfill \\
\end{gathered} \]
where $f_{\mathbf{x}} ({\mathbf{x}})$ is the probability density
function of ${\bf{x}}$ subject to (\ref{a7}) and
(\ref{a8}). The ergodic capacity is lower bounded by
\begin{align}
C &= \mathop {\max }\limits_{f_{\mathbf{x}} ({\mathbf{x}})} \;E_{\mathbf{H}} \left[ {I({\mathbf{x;y|H}})} \right]
 \geq  \mathop {\max }\limits_{f_{\mathbf{x}} ({\mathbf{x}})} \;E_{\mathbf{H}} \left[ {\tilde I  ({\mathbf{x;y|H}})} \right]\notag\\
  &= \mathop {\max }\limits_{f_{\mathbf{x}} ({\mathbf{x}})} \;E_{\mathbf{H}} \left[ {h({\mathbf{x|H}}) + \log \det \left( {\frac{1}
 {{\pi e}}{\mathbf{(H}}^\dag  {\mathbf{H + S}}_{{\mathbf{xx}}}^{{\mathbf{ - 1}}} {\mathbf{)}}} \right)\;}
 \right]\label{a17}\\
 &\geq \mathop {\max }\limits_{f_{\mathbf{x}} ({\mathbf{x}})} E_{\mathbf{H}} \left[ {h({\mathbf{x|H}})} \right] + \left(E_{\mathbf{H}} \left[ {\log \det \left( {\frac{1}
 {{\pi e}}{\mathbf{(H}}^\dag  {\mathbf{H + S}}_{{\mathbf{xx}}}^{{\mathbf{ - 1}}} {\mathbf{)}}} \right)\;}
 \right]\right)_{f_{\mathbf{x}} ({\mathbf{x}})=f^*_{\mathbf{x}} ({\mathbf{x}})}\label{a18}\\
  &= \mathop {\max }\limits_{f_{\mathbf{x}} ({\mathbf{x}})} \;h({\mathbf{x}}) + \left(E_{\mathbf{H}} \left[ {\log \det \left( {\frac{1}
 {{\pi e}}{\mathbf{(H}}^\dag  {\mathbf{H + S}}_{{\mathbf{xx}}}^{{\mathbf{ - 1}}} {\mathbf{)}}} \right)}
 \right]\right)_{f_{\mathbf{x}} ({\mathbf{x}})=f^*_{\mathbf{x}} ({\mathbf{x}})}\label{a19}\\
  &\triangleq C',\notag
  \end{align}
  %
%
where $f^*_{\mathbf{x}} ({\mathbf{x}})=\arg \mathop {\max }\limits_{f_{\mathbf{x}} ({\mathbf{x}})} E_{\mathbf{H}} \left[ {h({\mathbf{x|H}})} \right] = \arg \mathop {\max }\limits_{f_{\mathbf{x}} ({\mathbf{x}})} \;h({\mathbf{x}})$. $f^*_{\mathbf{x}} ({\mathbf{x}})$ and $C'$ can be obtained by solving the following problem
\begin{equation}
\begin{gathered}
\mathop {\max }\limits_{f_{\mathbf{x}} ({\mathbf{x}})}
\;h({\mathbf{x}})\\
\mbox{s.t.}\;\;\;\mbox{Tr}\left( {E_{\mathbf{x}} \left[
{{\mathbf{xx}}^\dag } \right]} \right) \leq
{\mathbf{P}}\\
\;\;\;\; \;\;\;\;\;\;\;\;\;\;\;\;\;\;\; \frac{{|x_i |^2 }} {{E_{x_i
} [|x_i |^2 ]}} \leq \;\rho _i, \;\;\; i=1,\ldots,m. \label{a20}
\end{gathered}
\end{equation}
Due to the circular symmetry of the constraints (\ref{a7}) and
(\ref{a8}), polar coordinates
\[\begin{gathered}
  {\mathbf{x}} = [x_1 ,x_2 ...,x_m ]^T  = [r_1 e^{j\theta _1 } ,r_2 e^{j\theta _2 } ,...,r_m e^{j\theta _m } ]^T  \hfill \\
  \;\;\;\;\;\;\;\;\;\;\;\;\;\;\;\;\;r_i  \geq 0,\;\;\;\theta _i  \in [0,2\pi ) \hfill \\
\end{gathered} \]
are found convenient, where $r_i $ and $\theta _i$ stand,
respectively, for the amplitude and phase of
$x_i$. Straightforward transformation yields
\[\begin{split}
h({\mathbf{x}}) &=  - \int {f_{\mathbf{x}} ({\mathbf{x}})} \log
f_{\mathbf{x}} ({\mathbf{x}})d{\mathbf{x}} =  - \int
{f_{{\mathbf{r,\theta }}} {\mathbf{(r,\theta )}}} \log
\frac{{f_{{\mathbf{r,\theta }}} {\mathbf{(r,\theta )}}}}
{{\prod\limits_{i = 1}^m {r_i } }}d{\mathbf{r}}d{\mathbf{\theta }}\\
  &= h({\mathbf{r,\theta }}) + \sum\limits_{i = 1}^m {\left( {\int {f_{r_i } {\mathbf{(}}r_i {\mathbf{)}}} \log r_i dr_i } \right)}
 \end{split}\]
where ${\mathbf r}$ and ${\mathbf \theta}$ are vectors consisting of $r_i $ and $\theta _i$, respectively. Note that
 \[
 h({\mathbf{r,\theta }}) \leq h({\mathbf{r}}) + h({\mathbf{\theta }}) \leq h({\mathbf{r}}) + m\log \;2\pi
 .\]
Therefore, to maximize $h({\mathbf{x}})$, we should choose
${\mathbf{r}}$ and ${\mathbf{\theta }}$ independent of each other, and all
$\theta_i$ distributed independently and uniformly in $[0,2\pi )$. Then
the equality holds and
\[
h({\mathbf{x}}) = h({\mathbf{r}}) + \sum\limits_{i = 1}^m {\left(
{\int {f_{r_i } {\mathbf{(}}r_i {\mathbf{)}}} \log r_i dr_i }
\right)}  + m\log 2\pi .
\]
Similarly,
\[
h({\mathbf{r}}) \leq \sum\limits_{i = 1}^m {h(r_i )}.
\]
Choosing $r_i $ independent of one another, the equality holds
and $h({\mathbf{x}})$ is maximized.\footnote{Note that the selection of independent $\theta_i$'s and $r_i $'s is one of the possible selections we made in the previous footnote to make ${\mathbf{S}}_{{\mathbf{xx}}}$ positive definite.} Drop the last term of
$h({\mathbf{x}})$, and transform (\ref{a20}) into the following
equivalent optimization problem
\begin{equation}
\begin{gathered}
  \;\;\;\mathop {\max }\limits_{f_{\mathbf{r}} {\mathbf{(r)}}} \;\left( {\sum\limits_{i = 1}^m {h(r_i )}  + \sum\limits_{i = 1}^m {\left( {\int {f_{r_i } {\mathbf{(}}r_i {\mathbf{)}}} \log r_i dr_i } \right)} } \right) \hfill \\
   = \mathop {\max }\limits_{f_{\mathbf{r}} {\mathbf{(r)}}} \;\left( { - \sum\limits_{i = 1}^m {\left( {\int {f_{r_i } {\mathbf{(}}r_i {\mathbf{)}}} \log \frac{{f_{r_i } {\mathbf{(}}r_i {\mathbf{)}}}}
{{r_i }}dr_i } \right)} } \right)\; \hfill \\
  \;\;\;\;\;\;\;\;\;\;\;\;\;\;\;\;\;\;\mbox{s.t.}\;\;\;\mbox{Tr}\left( {E_{\mathbf{r}} \left[ {{\mathbf{rr}}^{\mathbf{*}} } \right]} \right) \leq {\mathbf{P}} \hfill \\
  \;\;\;\;\;\;\;\;\;\;\;\;\;\;\;\;\;\;\;\;\;\;\;\;\;\;\;\;\;\;\;\frac{{|r_i |^2 }}
{{E_{r_i } [|r_i |^2 ]}} \leq \;\rho_{i}, \;\;\; i=1,\ldots,m.  \hfill \label{a21}\\
\end{gathered}
\end{equation}
 For each antenna $i$, given the transmission power $P_i$ such that
\[
\sum\limits_{i = 1}^m {P_i }  = {\mathbf{P}}
\]
and a PAPR constraint $\rho _i$, similar to \cite{thec}, the optimal
solution $f_{r_i }^* (r_i )$ is (see Appendix \ref{appxB})
\begin{equation}
\begin{split}
f_{r_i }^* (r_i ) &= a_i r_i \;\exp ( - b_i r_i ^2
/2),\;\;\;\;\;\forall r_i  \in \left[ {0,\sqrt {\rho _i P_i } }
\right]\\
f_{r_i }^* (r_i ) &=
0,\;\;\;\;\;\;\;\;\;\;\;\;\;\;\;\;\;\;\;\;\;\;\;\;\;\;\;\;\;\;\;\;\;\;\;\;\forall
r_i \notin \left[ {0,\sqrt {\rho _i P_i } } \right] \label{a22}
\end{split}
\end{equation}
where $a_i$, $b_i$ satisfy (\ref{a23}), (\ref{a24}) or (\ref{a25}):
\begin{align}
&\mbox{when}\;\;\rho _i  \ne 2,\; \rho_i>1 \notag\\
&\frac{{a_i }}
{{b_i }}(1 - \exp ( - b_i \rho _i P_i /2)) = 1\label{a23}\\
&2(a_i /b_i )(b_i \rho _i P_i )^{ - 1} [1 - (1 + b_i \rho _i P_i
/2)\exp ( - b_i \rho _i P_i /2)] = 1/\rho
_i\label{a24},\\
\notag\\
&\mbox{when}{\text{  }}\rho _i  = 2\notag\\
&a_i  = \frac{2} {{\rho _i P_i \;}},    \;\;\;b_i  = 0.\label{a25}
\end{align}
Denoting the maximum of $h(x_i )$ as $h^* (x_i )$ and $c_i  = b_i
\rho _i P_i$, we can compute $h^* (x_i )$ directly by using $f_{r_i
}^* (r_i )$
\begin{align}
h^* (x_i  ) &=  - \log a_i  + \frac{{b_i P_i }}
{2} + \log 2\pi \label{a60}\\
& = \; - \log a_i  + \frac{{c_i }}
{{2\rho _i }} + \log 2\pi\\
&=\left.
\begin{cases}
  \log P_i  + \log \frac{{\rho _i (1 - \exp ( -
c_i /2))}} {{c_i }}
+ \frac{{c_i }} {{2\rho _i }} + \log 2\pi,  \;\;\;\;\;\;  &\rho_i\neq2, ~\rho_i >1\\
\log {{2\pi}{P_i}},  &\rho_i=2
\end{cases}
\right..
\end{align}
From (\ref{a23}), $(1 - \exp ( - b_i
\rho _i P_i /2))^{-1} = \frac{{a_i }} {{b_i }}$. By substituting $\frac{{a_i }} {{b_i }}$ in (\ref{a24}) with $(1 - \exp ( - b_i
\rho _i P_i /2))^{-1}$ and then replacing $b_i \rho _i P_i$ with $c_i$, we will arrive at
\[
\frac{2}{c_i}-\frac{1}{1 - \exp ( - c_i/2)}+1=\frac{1}{\rho_i}
\]
which indicates that $1/\rho_i$ is a monotonic function of $c_i$, as shown in Fig.~\ref{figA1}. Thus,
when $\rho _i > 1$ is fixed and finite, $c_i$ is a finite constant.
With the independence between $x_i$'s,
\[
h^* ({\mathbf{x}}) = \sum\limits_{i = 1}^m {h^* (x_i )}.
\]
Now we can plug $h^* ({\mathbf{x}})$ and the corresponding (independent) distribution of ${\mathbf{x}}$ into (\ref{a19}) to obtain the lower bound $C'$ of the ergodic capacity.
Let
%
\[
k_i  =
\begin{cases}
 \log \frac{{\rho _i (1 - \exp ( - c_i /2))}} {{c_i }} +
\frac{{c_i }} {{2\rho _i }} + \log \frac{2} {e},   \;\;\;\;\;\;&\rho_i\neq2, ~\rho_i>1\\
 \log \frac{2} {e}, &\rho_i=2
\end{cases}
\]
which is a constant because $c_i$ is a finite constant when $\rho _i > 1$ is fixed and finite.
%
We have
\[
C' = E_{\mathbf{H}} \left[ {\log \det \left( {{\mathbf{I}} +
{\mathbf{H}}  {\mathbf{S}}_{{\mathbf{xx}}} {\mathbf{H}}^\dag}
\right)} \right] + \sum\limits_{i = 1}^m {k_i },
\]
where the equality follows from the determinant identity $\det ({\mathbf{I
+ AB}}) = \det ({\mathbf{I + BA}})$.
With the selection of equal-power allocation, $P_i = {\mathbf{P}}/m$, $\forall i$. Thus
\begin{equation}
C \geq C'  = E_{\mathbf{H}} \left[ {\log \det \left( {{\mathbf{I}}
+ \frac{{\mathbf{P}}} {m}{\mathbf{H}} {\mathbf{H}}^\dag} \right)}
\right] + \sum\limits_{i = 1}^m {k_i }.\label{a29}
\end{equation}
%
Note that the inequality holds for any distribution of $\mathbf{H}$.
 When $\rho _i \to \infty$, $a_i = b_i = 2/P_i$, and
\[
C=C'   = E_{\mathbf{H}} \left[ {\log \det \left( {{\mathbf{I}} +
\frac{{\mathbf{P}}} {m}{\mathbf{H}}  {\mathbf{H}}^\dag} \right)}
\right]
\]
which is the classical result without PAPR constraints. The constant $k_i$ (i.e., the difference between $h^* (x_i)$ when $\rho_i \to \infty$ and when $\rho_i$ is finite) is shown in Fig.~\ref{figA2}.
\section{Derivation of the Optimal Signal Probability Density Functions} \label{appxB}
We consider the following optimization problem with average power and PAPR constraints
\begin{equation}
\begin{gathered}
  \;\;\mathop {\max }\limits_{f_r {\mathbf{(}}r{\mathbf{)}}} \;\left( { - \int {f_r {\mathbf{(}}r{\mathbf{)}}} \log \frac{{f_r {\mathbf{(}}r{\mathbf{)}}}}
{r}dr} \right)\; \hfill \\
  \;\;\;\;\mbox{s.t.}\;\;\;\;\;E\left[ {|r|^2 } \right] \leq P \hfill \\
  \;\;\;\;\;\;\;\;\;\;\;\;\;\;\frac{{|r|^2 }}
{{E[|r|^2 ]}} \leq \;\rho. \;\;\;\; \hfill \\
\end{gathered}
\label{1a}
\end{equation}
Note that the PAPR constraint is different from the {\it peak} power constraint considered in \cite{thec}, thus the results in \cite{thec} do not directly apply to our case. The following derivation (verification) is necessary.
The problem will be solved through the
following slightly different problem with average and peak
power constraints
\begin{equation}
\begin{gathered}
  \;\;\;\mathop {\max }\limits_{f_r {\mathbf{(}}r{\mathbf{)}}} \;\left( { - \int {f_r {\mathbf{(}}r{\mathbf{)}}} \log \frac{{f_r {\mathbf{(}}r{\mathbf{)}}}}
{r}dr} \right)\; \hfill \\
  \;\;\;\;\mbox{s.t.}\;\;\;\;\;E\left[ {|r|^2 } \right] \leq P \hfill \\
  \;\;\;\;\;\;\;\;\;\;\;\;\;\;\;\;\;\;\;\;\;|r|^2  \leq \;\rho P\; \hfill \\
\end{gathered}
\label{2a}
\end{equation}
which can be rewritten as
\begin{equation}
\begin{gathered}
  \mathop {\max }\limits_{f_r {\mathbf{(}}r{\mathbf{)}}}  - \left( {\int_0^{\sqrt {\rho P} } {f_r (r)\log \frac{{f_r (r)}}
{r}dr} } \right) \hfill \\
  \;\mbox{s.t.}\;\;\;\;\;f_r (r) \geq 0,\;\;\;\;\forall r \in [0,\sqrt {\rho P} ] \hfill \\
  \;\;\;\;\;\;\;\;\;\;\int_0^{\sqrt {\rho P} } {f_r (r)dr = 1}  \hfill \\
  \;\;\;\;\;\;\;\;\;\int_0^{\sqrt {\rho P} } {r^2 f_r (r)dr \leq P}. \;\; \hfill \\
\end{gathered}
\label{3a}
\end{equation}
The optimal solution $f_r ^* (r)$ of (\ref{3a}) is given by the
standard variational techniques \cite{thec}
\begin{align}
f_r ^* (r) &=ar{}\exp ( - br^2 /2),\;\;\;\;\;\;\;\;\ \forall r \in
\left[ {0,\sqrt {\rho P} } \right]\label{4a}\\
f_r ^* (r) &=\;\;\;\;
0,\;\;\;\;\;\;\;\;\;\;\;\;\;\;\;\;\;\;\;\;\;\;\;\;\;\;\;\;\;\;\;\;\forall
r \ne \left[ {0,\sqrt {\rho P} } \right]. \label{5a}
\end{align}
Observe that if the first equality in (\ref{2a}) holds, the optimal
solution $f_r ^* (r)$ of (\ref{3a}) is also the optimal solution of
(\ref{1a}). However, the equality does not always hold.

We discuss $a$, $b$ for different values of PAPR ($\rho >1$). When
$\rho>2$, $a$, $b$ satisfy
\begin{align}
&\frac{a} {b}(1 - \exp ( - b\rho P/2)) = 1\label{6a}\\
&2(a/b)(b\rho P)^{ - 1} [1 - (1 + b\rho P/2)\exp ( - b\rho P/2)] =
1/\rho. \;\label{7a}
\end{align}
Equations (\ref{6a}) and (\ref{7a}) together solve $b$ as a function of $\rho$
and $P$ which is illustrated in Fig.~\ref{figA1} with $T = \rho P/2$.
Fig.~\ref{figA1} shows that $b>0$ when $\rho>2$. Thus the first
equality in (\ref{2a}) holds and $f_r ^* (r)$ is also the optimal
solution of (\ref{1a}). Note that when $\rho \to \infty$, $ a = b = 2/P$
and $f_r ^* (r)$ is the Rayleigh distribution as expected.
\begin{figure}[htb]
\centering
\includegraphics[width=0.5\textwidth]{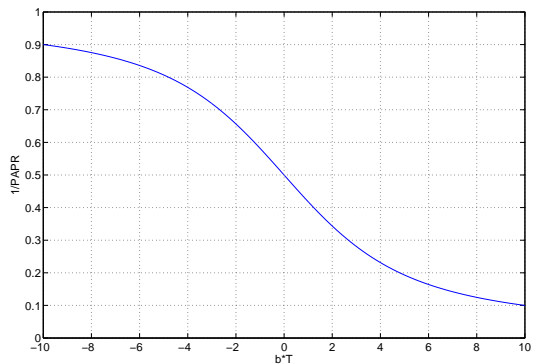}
\caption{The relation between $bT$ ($bT=b\rho P/2=c/2$, as defined in Appendix \ref{appxa}, before (\ref{a60})) and $1/\rho$ subject to
(\ref{6a}) and (\ref{7a}).} \label{figA1}
\end{figure}

When $\rho=2$, $a$, $b$ satisfy
\begin{align}
&b = 0,\;\;a = 1/P. \label{8a}
\end{align}
In this case, $f_r ^* (r)$ is linear and the equality in (\ref{2a})
is again satisfied and $f_r ^*(r)$ is the optimal solution of
(\ref{1a}).

For the case of $\rho<2$, the Karush-Kuhn-Tucker (KKT) conditions for the optimization problem (\ref{3a}) require that $b\geq0$.
However, from Fig.~\ref{figA1}, $b<0$ when $\rho<2$. Therefore, $a$,
$b$ for the optimal solution of (\ref{3a}) should satisfy (\ref{8a}).
In this situation
\begin{equation}
\int_0^{\sqrt {\rho P} } {r^2 f_r ^* (r)dr = \frac{\rho } {2}P < P}.
\label{9a}
\end{equation}
That is, the first equality in (\ref{2a}) does not hold, and the corresponding PAPR value is $2$, larger than $\rho$. As a result, $f_r ^* (r)$ is not the optimal solution of (\ref{1a}).

To obtain the optimal solution of (\ref{1a}) when $\rho<2$, consider
the following problem with slightly different constraints
\begin{equation}
\begin{gathered}
  \;\;\mathop {\max }\limits_{f_r {\mathbf{(}}r{\mathbf{)}}} \;\left( { - \int {f_r {\mathbf{(}}r{\mathbf{)}}} \log \frac{{f_r {\mathbf{(}}r{\mathbf{)}}}}
{r}dr} \right)\; \hfill \\
  \;\;\;\;\mbox{s.t.}\;\;\;\;\;E\left[ {|r|^2 } \right] = P \hfill \\
  \;\;\;\;\;\;\;\;\;\;\;\;\;\;\;\;\;\;\;\;\;|r|^2  \leq \;\rho P.\; \hfill \\
\end{gathered}
\label{10a}
\end{equation}
Using similar optimization techniques, the optimal solution of (\ref{10a}), $f_r' (r)$, is found to have the
same form as (\ref{4a}), (\ref{5a}) with $b<0$. Therefore, $f_r'
(r)$ is not a Rayleigh-like distribution. We will show that $f_r'
(r)$ is also the optimal solution of (\ref{1a}). Assuming that the
distribution $ f_r'' (r)$ is the optimal solution of (\ref{1a}) and
$ f_r'' (r) \neq f_r' (r)$, then $f_r'' (r)$ must be the optimal
solution of the following optimization problem for some $P''$
\begin{equation}
\begin{gathered}
  \;\mathop {\max }\limits_{f_r {\mathbf{(}}r{\mathbf{)}}} \;\left( { - \int {f_r {\mathbf{(}}r{\mathbf{)}}} \log \frac{{f_r {\mathbf{(}}r{\mathbf{)}}}}
{r}dr} \right)\; \hfill \\
  \;\;\;\;\mbox{s.t.}\;\;\;\;\;E\left[ {|r|^2 } \right] = P'' < P \hfill \\
  \;\;\;\;\;\;\;\;\;\;\;\;\;\;\;\;\;|r|^2  \leq \;\rho P'' < \rho P.\;\; \hfill \\
\end{gathered}
\label{11a}
\end{equation}
However, (\ref{10a}) has a larger maximum value, namely $( - \log a
+ bP/2)$, than that of (\ref{11a}) because $P>P''$, which implies
that $f_r' (r)$ maximizes (\ref{1a}). In Fig.~\ref{figA2}, we
demonstrate the maximum values of $h^* (x)$, where {$h^* (x) = ( -
\log a + bP/2 + \log 2\pi )$, for $\rho=
5,\;\;2,\;\;1.1,\;\;\infty$.
\begin{figure}[htb]
\centering
\includegraphics[width=0.5\textwidth]{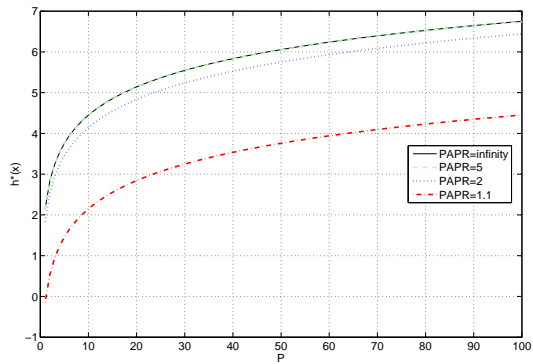}
\caption{The maximum value $h^* (x)$ for different $\rho$ (PAPR)
values}
 \label{figA2}
\end{figure}
\section{Proof of Theorem \ref{thm2}}\label{appxc}
We follow the method in \cite{diva}, letting the $i$-th element of
the input signal be drawn from the random code with {\it i.i.d.} distribution
$f_{x_i }^* (x_i )$
\begin{equation}
\begin{split}
f_{x_i }^* (x_i )&=\frac{1}{2\pi}
a_{i}exp\left(-\frac{b_{i}}{2}|x_{i}|^{2}\right),\;\;\;\; x_{i}\in
\textbf{B}_{i} \label{1b}\\
f_{x_i }^* (x_i
)&=0,\;\;\;\;\;\;\;\;\;\;\;\;\;\;\;\;\;\;\;\;\;\;\;\;\;\;\;\;\;\;\;\,\;\;\;\;\;\;\;\;\;x_{i}
\notin \textbf{B}_{i}
\end{split}
\end{equation}
where $x_{i}\in \mathbb{C}$, $ \textbf{B}_{i}\triangleq \{
x_{i}\mathbf{\mid} \;\;|x_{i}| \leq \sqrt{\rho_{i}P_{i}} \; \} $. At
data rate $R=rlogSNR$, the error probability is
\[
P_{e}(SNR)\leq P_{out}(R)+P(error,no\;outage).
\]
The second term can be upper bounded via a union bound. Assume that
$X(0)$, $X(1)$ are two possible transmitted codewords, and ${\Delta
X}={X(1)-X(0)}$. Suppose that $X(0)$ is transmitted. The probability that
a maximum likelihood receiver will make a detection error in favor
of $X(1)$, conditioned on a certain realization of the channel, is
\begin{align}
P\left( {X(0) \to X(1)|{\mathbf{H}} = H} \right) &= P\left( {\left\|
{\frac{1} {2}H {\Delta X}} \right\|^2 \leq
\left\| {\mathbf{w}} \right\|^2 } \right)\label{2b}\\
&\leq exp\left[-\frac{1}{4}\|H\Delta
X\|^{2}\right]\label{3b}
\end{align}
where $\mathbf{w}$ is the additive  noise on the direction of
$H\Delta X$. Then we need to average over the ensemble of random
codes. Let $x_i$ and $x'_i$ be two {\it i.i.d.} random variables with distribution in the form of (\ref{1b}), and $x'_i-x_i=\hat{x}_i$. The probability density function of $\hat{x}_i$ is
\begin{equation}
\begin{split}
f_{\hat{x_{i}}}(\hat{x_{i}})=\frac{1}{2\pi}a_{i}{}exp\left(-\frac{b_{i}}{4}|\hat{x_{i}}|^{2}\right)
\int_{x_{i}\in\mathbf{C}_{i}}\frac{1}{2\pi}a_{i}{}exp\left(-b_{i}|x_{i}-\frac{\hat{x_{i}}}{2}|^{2}\right)dx_{i}
\label{4b}
\end{split}
\end{equation}
where $x_{i}\in\mathbf{C}_{i}$ if $\;\;x_{i}\in\textbf{B}_{i}$ and
$x'_{i}\in\textbf{B}_{i}$. We discuss different values of $b_{i}$.

For $b_{i}>0$,
\begin{equation}
\int_{x_{i}\in\mathbf{C}_{i}}\frac{1}{2\pi}a_{i}{}exp\left(-b_{i}|x_{i}-\frac{\hat{x_{i}}}{2}|^{2}\right)dx_{i}\leq
\int_{x_{i}\in\Bbb{C}}\frac{1}{2\pi}a_{i}{}exp\left(-b_{i}|x_{i}-\frac{\hat{x_{i}}}{2}|^{2}\right)dx_{i}=t_{1}
\label{4b1}
\end{equation} where $t_{1}$ is a constant, which is independent of of $P_{i}$.

For $b_{i}=0$,
since $|x_{i}-\frac{\hat{x_{i}}}{2}|\leq2\sqrt{\rho_iP_i}$ and
$a_{i}P_i{\rho_i}=\rho_i$
\begin{equation}
\int_{x_{i}\in\mathbf{C}_{i}}\frac{1}{2\pi}a_{i}{}exp\left(-b_{i}|x_{i}-\frac{\hat{x_{i}}}{2}|^{2}\right)dx_{i}\leq
\int_{x_{i}\in\Bbb{C}}\frac{1}{2\pi}a_{i}{}exp\left(-a_{i}|x_{i}-\frac{\hat{x_{i}}}{2}|^{2}\right)exp\left(4a_{i}\rho_i{P_i}\right)dx_{i}=t_{2}
\label{4b2}
\end{equation}
where $t_{2}$ is a constant, which is independent of of $P_{i}$.

For $b_{i}<0$,
since $|x_{i}-\frac{\hat{x_{i}}}{2}|\leq2\sqrt{\rho_i{P_i}}$ and
$b_{i}P_i{\rho_i}$ is a constant.
\begin{equation}
\int_{x_{i}\in\mathbf{C}_{i}}\frac{1}{2\pi}a_{i}{}exp\left(-b_{i}|x_{i}-\frac{\hat{x_{i}}}{2}|^{2}\right)dx_{i}\leq
\int_{x_{i}\in\Bbb{C}}\frac{1}{2\pi}a_{i}{}exp\left(b_{i}|x_{i}-\frac{\hat{x_{i}}}{2}|^{2}\right)exp\left(-8b_{i}\rho_i{P_i}\right)dx_{i}=t_{3}
\label{4b3}
\end{equation}
where $t_{3}$ is a constant, which is independent of of $P_{i}$.

Thus  we have
\begin{align}
f_{\hat{x_{i}}}(\hat{x_{i}}) &\leq
c_{i}\cdot\frac{1}{2\pi}a_{i}{}exp\left(-\frac{b_{i}}{4}|\hat{x_{i}}|^{2}\right)
 \label{6b1} \\ &\leq
d_{i}c_{i}\cdot\frac{1}{2\pi}a_{i}{}exp\left(-\frac{b'_{i}}{4}|\hat{x_{i}}|^{2}\right)
\label{6b2} \\ &\leq
d_{i}c_{i}\cdot\frac{1}{2\pi}a_{i}{}exp\left(-\frac{b_{min}}{4}|\hat{x_{i}}|^{2}\right)
\label{6b}
\end{align}
where $c_{i}=t_{1}$, $t_{2}$, or $t_{3}$. (\ref{6b2}) follows from
(\ref{6b1}) by using the same techniques as above and $d_{i}$ is a
constant independent of $P_{i}$. $b_{min}=\min(b'_{i})$, where
$b'_{i}=b_{i}$ for $b_{i}>0$; $b'_{i}=a_{i}$ for $b_{i}=0$;
$b'_{i}=-b_{i}$ for $b_{i}<0$. The average pairwise error
probability given the channel realization is
\begin{equation}
\overline{P}\left( X(i) \to X(j), i \neq j|{\mathbf{H}} = H
\right)\leq \tilde{K} \left( {\prod\limits_{i = 1}^m {a_i } } \right)^l \det
\left( {{b_{min}I} + H  H^\dag} \right)^{ - l} \label{7b}
\end{equation}
where $\tilde{K}$ is a constant which is not important here.
At a data rate $R=r\log{SNR}$, we have a total of
${SNR}^{lr}$ codewords. Applying the union bound, we have
\begin{align}
P\left( {error|{\mathbf{H}} = H} \right) \leq &\tilde{K}{SNR}^{lr}
\left( {\prod\limits_{i = 1}^m {a_i } } \right)^l \det \left(
{{b_{min}I} + H  H^\dag} \right)^{ - l}
\notag\\
= &\tilde{K}{SNR}^{lr} \left( {\prod\limits_{i = 1}^m \frac{a_i
}{b_{\min }}  } \right)^l \det \left( {I + \frac{1}{b_{\min }}H
H^\dag}
\right)^{ - l}\notag\\
{\dot  \leq } &\tilde{K}'{SNR}^{lr} \det \left( {I + {SNR}H
H^\dag} \right)^{ -
l}\notag\\
= & \tilde{K}'{SNR}^{lr} \prod\limits_{i = 1}^{\min (m,n)} {\left( {1
+ {SNR}\lambda _i } \right)^{ - l} } \notag \\
\doteq &{SNR}^{ - l\left[ {\sum\nolimits_{i = 1}^{\min (m,n)}
{\left( {1 - \alpha _i } \right)^ +   - r} } \right]}\label{8b}
\end{align}
where $\lambda_{i}$ are the singular values of $H$ and
$\lambda_{i}={SNR}^{-\alpha_{i}}$. Equation (\ref{8b}) is
exactly the same as (19) of \cite{diva}. Following the remaining
steps in \cite{diva}, it can be shown that for $l\geq(m+n-1)$, the
D-MG tradeoff with PAPR constraints is achievable.

\bibliographystyle{IEEEtran}
\bibliography{IEEEabrv,reference}

\end{document}